\definecolor{jens}{rgb}{0,0.8,0.4}
\newcommand\id{{\mathbbm{1}}}
\newcommand{\bra}[1]{\langle #1|}
\newcommand{\ket}[1]{|#1\rangle}
\newcommand{\braket}[2]{\langle #1|#2\rangle}
\newcommand{\ketbra}[2]{| #1 \rangle \langle #2 |}
\newcommand{\proj}[1]{\vert #1\rangle\!\langle#1 \vert}
\DeclareMathOperator{\Tr}{Tr}
\newtheorem{thm}{Theorem}
\newtheorem{dfn}{Definition}
\newtheorem{lem}{Lemma}
\begin{document}

\title{The resource theory of steering}

\author{Rodrigo Gallego}
\affiliation{Dahlem Center for Complex Quantum Systems, Freie Universit\"at Berlin, 14195 Berlin, Germany}
\author{Leandro Aolita}
\affiliation{Dahlem Center for Complex Quantum Systems, Freie Universit\"at Berlin, 14195 Berlin, Germany}
\affiliation{Instituto de F\'isica, Universidade Federal do Rio de Janeiro, P. O. Box 68528, Rio de Janeiro, RJ 21941-972, Brazil}

\begin{abstract}
We present an operational framework for Einstein-Podolsky-Rosen steering  as a physical resource. For arbitrary-dimensional bipartite systems composed of a quantum subsystem and a black-box device, we show that local operations assisted by one-way classical communication (1W-LOCCs) from the quantum part to the black box cannot create steering. Based on this, we build a resource theory of steering with 1W-LOCCs  as the \emph{free operations}. We introduce the notion of \emph{convex steering monotones} as the fundamental axiomatic quantifiers of steering. As a convenient example thereof, we present the \emph{relative entropy of steering}. In addition, we  prove that two previously proposed quantifiers, the steerable weight and the robustness of steering, are also convex steering monotones. To end up with, for minimal-dimensional systems, we establish, on the one hand, necessary and sufficient conditions for pure-state steering conversions under stochastic 1W-LOCCs and prove, on the other hand, the non-existence of \emph{steering bits}, i.e., measure-independent maximally steerable states from which all states can be obtained by means of the free operations. Our findings reveal unexpected aspects of steering and lay foundations for further research, with potential implications in Bell non-locality. 
\end{abstract}


\maketitle 
\section{Introduction}
Steering, as Schr\"odinger named it \cite{Schrodinger35}, is an exotic quantum effect by which ensembles of quantum states can be remotely prepared by performing local measurements at a distant lab. 
It allows \cite{Wiseman07, Reid09} to certify the presence of entanglement between a user with an untrusted measurement apparatus,  Alice, and another with a trusted quantum-measurement device, Bob.
 Thus, it constitutes a fundamental notion between quantum entanglement \cite{Horodecki09}, whose certification requires quantum measurements on both sides, and Bell non-locality \cite{Brunner13}, where both users possess untrusted black-box devices. 
Steering can be detected through simple tests analogous to Bell inequalities \cite{Cavalcanti09}, and has been verified in a variety of remarkable experiments \cite{Experiments}, including steering without Bell non-locality \cite{Saunders10} and a fully loop-hole free steering demonstration \cite{loopholefree}.  Apart from its fundamental relevance, steering has been identified as a resource for one-sided (1S) device-independent (DI) quantum key-distribution (QKD), where only one of the parts has an untrusted apparatus while the other ones possess trusted devices \cite{Branciard12, He13}. 
There, the experimental requirements for unconditionally secure keys are less stringent than in fully (both-sided) DI-QKD \cite{Deviceindependent}. 

The formal treatment of a physical property as a resource is given by a \emph{resource theory}. The basic component of this is a restricted class of operations, called the \emph{free operations}. These are typically the operations at hand in physical scenarios where the property in question acts as a useful resource. They fulfil the essential requirement of mapping every \emph{free state}, i.e., every one without the property, into a free state. 
Furthermore, they provide a formal recipe for the quantification of the resource:
The fundamental necessary condition for a function to be a measure of the resource is that it is monotonous --non-increasing-- under the free operations. 
That is, the operations that do not increase the resource on the free states do not increase it on all other states either. 

Entanglement theory is the most popular and best understood  \cite{VedralPlenio,Brandao08} resource theory. There, local operations assisted by classical communication (LOCCs) are usually the natural free operations \cite{Bennett96}. However, other sets of operations that do not create entanglement either have also been  considered \cite{Jonathan99,Vedral97}.
On the other hand, resource theories have been formulated also for states out of thermal equilibrium \cite{Brandao13}, asymmetry \cite{Ahmadi13}, reference frames \cite{Gour08}, and non-locality \cite{Gallego12,deVicente14}, for instance.

In steering theory, systems are described by a collection of ensembles of quantum states on Bob's side and a conditional probability distribution of measurement outcomes (outputs) given measurement settings (inputs) on Alice's. Each input of Alice's is associated with an ensemble of Bob's and each output with a member state of such ensemble. That is, each pair of inputs and outputs of Alice's is correlated with a state in an ensemble of Bob's. Such systems are called {\it assemblages} \cite{Pusey13,Skrzypczyk13,Bowles14,Quintino14, Piani15,Belen15}. The free operations for steering
must thus arise from natural constrains native of a physical scenario where steerable assemblages are useful for some task. Up to now, no attempt for an operational framework of steering as a resource had been reported.

In this work we develop the resource theory of steering. First, we observe that one-way LOCCs (1W-LOCCs) from Bob to Alice are allowed operations --in the sense of not compromising the security-- in 1S-DI-QKD protocols, the physical scenario where steering is a known resource \cite{Branciard12, He13}. Then, for arbitrarily many inputs and outputs for Alice's black box and arbitrary Hilbert-space dimension for Bob's quantum system, we show that 1W-LOCCs from Bob to Alice do not create steering. These two facts give us clear physical motivations to take 1W-LOCCs as natural free operations for steering. We present the explicit parametrisation of a generic 1W-LOCC acting on assemblages. With this, we provide a formal definition of steering monotones.
As an example thereof, we present the relative entropy of steering, for which we also introduce, on the way, the notion of relative entropy between assemblages. 
In addition, we prove 1W-LOCC monotonicity for two other recently proposed steering measures, the steerable weight \cite{Skrzypczyk13} and the robustness of steering \cite{Piani15}, and convexity for all three measures.
To end up with, we prove two theorems on steering conversion under stochastic 1W-LOCCs for the lowest-dimensional case, i.e., qubits on Bob's side and 2 inputs $\times$ 2 outputs on Alice's. In the first one, we 
show that it is impossible to transform via 1W-LOCCs, not even probabilistically,  an assemblage composed of pairs of pure orthogonal states into another assemblage composed also of pairs of pure orthogonal states but with a different pair overlap, unless the latter is unsteerable. This yields infinitely many inequivalent classes of steering already for systems of lowest dimension.
In the second one, we show that there exists no assemblage composed of pairs of pure states that can be transformed into any assemblage by stochastic 1W-LOCCs.
This implies, in striking contrast to entanglement theory,  that there exists no operationally well defined, measure-independent maximally steerable assemblage of minimal dimension. 

The paper is organized as follows. In Sec. \ref{sec:scenario} we formally define assemblages and present their basic properties. In Sec. \ref{sec:1WLOCC_1SDIQKD} we discuss the role of 1W-LOCCs as the natural operations available to Alice and Bob in 1S-DI-QKD. In Sec. \ref{sec:theoperationalframework} we find an explicit parametrisation of all stochastic 1W-LOCCs from assemblages into assemblages and show that the resulting maps are steering non-increasing operations. In Sec. \ref{sec:steeringmonotonicity} we introduce the notion of convex steering monotones. In Sec. \ref{sec:therelativeentropyofsteering} we present the relative entropy of steering. In Sec. \ref{sec:other_measures} we show convexity and 1W-LOCC-monotonicity of the steerable weight and the robustness of steering. In Sec. \ref{sec:assemblageconversions} we study, for minimal-dimensional systems, assemblage conversions under 1W-LOCCs and prove the in existence of pure-assemblage steering bits. Finally, in Sec. \ref{sec:discussionandoutlook} we present our conclusions and mention some future research directions that our results offer.
\section{Assemblages and steering}\label{sec:scenario}
We consider two distant parties, Alice and Bob, who have each a half of a bipartite system. Alice holds a so-called black-box device, which, given a classical input $x\in[s]$, generates a classical output $a\in[r]$, where $s$ and $r$ are natural numbers and the notation $[n]\coloneqq\{0,\ldots,n-1\}$, for $n \in\mathbb{N}$, is introduced. Bob holds a quantum system of dimension $d$ (\emph{qudit}), whose state he can perfectly characterize tomographically via trusted quantum measurements.
The joint state of their system is thus fully specified by an {\it assemblage}
\begin{equation}
\label{eq:ensembledef}
\rho_{A|X}\coloneqq\{P_{A|X}(a,x),\varrho(a,x)\}_{a\in[r],x\in[s]},
\end{equation} 
of normalized quantum states $\varrho(a,x)\in\mathcal{L}(\mathcal{H}_B)$, with $\mathcal{L}(\mathcal{H}_B)$ the set of linear operators on Bob's subsystem's Hilbert space $\mathcal{H}_B$, each one associated to a conditional probability $P_{A|X}(a,x)$ of Alice getting an output $a$ given an input $x$. We denote by $P_{A|X}$ the corresponding conditional probability distribution. 

Equivalently, each pair $\{P_{A|X}(a,x),\varrho(a,x)\}$ can be univocally represented by the unnormalized quantum state 
\begin{equation}
\label{eq:unnorm_state}
\varrho_{A|X}(a,x)\coloneqq P_{A|X}(a,x)\times \varrho(a,x).
\end{equation} 
In turn, an alternative representation of the assemblage $\rho_{A|X}$ is given by the set $\hat{\rho}_{A|X}\coloneqq\{\hat{\rho}_{A|X}(x)\}_x$ of quantum states 
\begin{equation}
\label{eq:quant_rep}
\hat{\rho}_{A|X}(x)\coloneqq\sum_{a} \proj{a} \otimes \varrho_{A|X}(a,x)\in\mathcal{L}(\mathcal{H}_E\otimes\mathcal{H}_B),
\end{equation}
where $\{\ket{a}\}$ is an orthonormal basis of an auxiliary extension Hilbert space $\mathcal{H}_E$ of dimension $r$. The states $\{\ket{a}\}$ do not describe the system inside Alice's box, they are just abstract flag states to represent its outcomes with a convenient bra-ket notation. Expression \eqref{eq:quant_rep} gives the counterpart for assemblages of the so-called extended Hilbert space representation used for ensembles of quantum states \cite{Oreshkov09}. We refer to $\hat{\rho}_{A|X}$
for short as the \emph{quantum representation} of $\rho_{A|X}$ and use either notation upon convenience.

We restrict throughout to \emph{no-signaling assemblages}, i.e., those for which Bob's reduced state $\varrho_B\in\mathcal{L}(\mathcal{H}_B)$ does not depend on Alice's input choice $x$:
\begin{equation}
\label{eq:nosignaling}
\varrho_B\coloneqq\sum_{a}\varrho_{A|X}(a,x)=\sum_{a}\varrho_{A|X}(a,x') \:\:\: \forall\ x,x'.
\end{equation}
The assemblages fulfilling the no-signaling condition \eqref{eq:nosignaling} are the ones that possess a \emph{quantum realization}. That is, they can be obtained from local quantum measurements by Alice on a joint quantum state $\varrho_{AB}\in\mathcal{L}(\mathcal{H}_A\otimes\mathcal{H}_B)$ shared with Bob, where $\mathcal{H}_A$ is the Hilbert space of the system inside Alice's box. For any no-signaling assemblage $\rho_{A|X}$, we refer as the \emph{trace of the assemblage}  to the $x$-independent quantity 
\begin{equation}
\label{eq:trace_ass}
\Tr[\rho_{A|X}]\coloneqq\Tr_{EB}[\hat{\rho}_{A|X}]=\Tr[\varrho_B]=\sum_{a}P_{A|X}(a,x),
\end{equation}
and say that the assemblage is normalized if $\Tr[\rho_{A|X}]=1$ and unnormalized if $\Tr[\rho_{A|X}]\leq1$.

An assemblage $\sigma_{A|X}\coloneqq\{\varsigma_{A|X}(a,x)\}_{a\in[r],x\in[s]}$, being $\varsigma_{A|X}(a,x)\in\mathcal{L}(\mathcal{H}_B)$ unnormalised states, is called \emph{unsteerable} if there exist a probability distribution $P_{\Lambda}$, a  conditional probability distribution $P_{A|X \Lambda}$, and normalized states $\xi(\lambda)\in\mathcal{L}(\mathcal{H}_B)$ such that
\begin{equation}
\label{eq:lhs}
\varsigma_{A|X}(a,x)=\sum_{\lambda}P_{\Lambda}(\lambda)P_{A|X \Lambda}(a,x,\lambda) \:\xi(\lambda)\:\:\: \forall\ x,a.
\end{equation}
Such assemblages 
can be obtained by sending a shared classical random variable $\lambda$ to Alice, correlated with the state $\xi(\lambda)$ sent to Bob, and letting Alice classically post-process her random variable according to  $P_{A|X \Lambda}$, with $P_{X,\Lambda}=P_X \times P_{\Lambda}$ so that condition \eqref{eq:nosignaling} holds.  The variable $\lambda$ is called a \emph{local-hidden variable} and the decomposition \eqref{eq:lhs} is accordingly referred to as a \emph{local-hidden state} (LHS) model. We refer to the set of all unsteerable assemblages as $\mathsf{LHS}$. Any assemblage that does not admit a LHS model as in Eq. \eqref{eq:lhs} is called \emph{steerable}. An assemblage is compatible with classical correlations if, and only if, it is unsteerable. 

To end up with, a comment on steering as a property of assemblages, as opposed to quantum states, is in order. In the pioneering works~\cite{Wiseman07, Reid09}, steering is defined as a property of quantum states. Namely, there, a state is said to be steerable if it can give rise, under local measurements, to correlations without a LHS model, i.e., to a steerable assemblage. The definition considered here, directly in terms of assemblages, and to which our resource theory applies, follows the treatment of Refs.~\cite{Pusey13,Skrzypczyk13,Bowles14,Quintino14,Piani15,Belen15}, for instance, where one embeds a share of the bipartite quantum state in the untrusted measurement device (Alice's, in our case) and treats the entire embedding as a black box with unknown behaviour. This is the scenario of 1S-DI-QKD, the very task for which steering is a known useful resource. There, quantum states alone are not useful, since, without a trusted measurement device, correlations without LHS model can in general not be obtained from them. Working with assemblages, in contrast, is advantageous precisely because it removes the need of measurement specification in the untrusted part \cite{footnote2}. Finally, it is important to note that post-quantum steering, i.e., steerable assemblages that, in spite of satisfying the no-signalling principle, do not admit a quantum realisation, has been recently discovered \cite{Belen15}.

\section{The operational framework}
\label{sec:theoperationalframework}
In this section, we show that stochastic 1W-LOCCs from Bob to Alice do not create steering and can, therefore, be taken as free operations for steering. We consider the general scenario of stochastic 1W-LOCCs, i.e., 1W-LOCCs that do not necessarily occur with certainty, which map the initial assemblage $\rho_{A_f|X_f}$ into a final assemblage $\rho_{A_f|X_f}$. See Fig. \ref{fig:1}.
Bob's generic quantum operation can be represented by an incomplete generalised measurement. This is described by a completely-positive non trace-preserving map $\mathcal{E}:\mathcal{L}(\mathcal{H}_B)\to\mathcal{L}({\mathcal{H}_B}_f)$ defined by 
\begin{subequations}
\label{eq:Kraus_def}
\begin{align}
\mathcal{E}(\cdot)\coloneqq&\sum_{\omega}\mathcal{E}_{\omega}(\cdot),\ \text{with}\ \mathcal{E}_{\omega}(\cdot)\coloneqq K_{\omega} \: \cdot \: K^{\dagger}_{\omega},\\
&\text{such that}\ \sum_{\omega}K^{\dagger}_{\omega}K_{\omega}\leq \id,
\end{align}
\end{subequations}
where ${\mathcal{H}_B}_f$ is the final Hilbert space, of dimension $d_f$, and $K_{\omega}:\mathcal{H}_B\to{\mathcal{H}_B}_f$ is the measurement operator corresponding to the $\omega$-th measurement outcome. For any normalised $\varrho_B\in\mathcal{L}(\mathcal{H}_B)$, the trace $\Tr[\mathcal{E}(\varrho_B)]
\leq1$ of the map's output $\mathcal{E}(\varrho_B)$ represents the probability that the physical transformation $\varrho_B\to\mathcal{E}(\varrho_B)/\Tr[\mathcal{E}(\varrho_B)]$ takes place. 
In turn, the map $\mathcal{E}_{\omega}(\cdot)$ describes the post-selection of the $\omega$-th outcome, which occurs with a probability 
\begin{equation}
\label{eq:P_Omega}
P_\Omega(\omega)\coloneqq\Tr[\mathcal{E}_{\omega}(\rho_{B})]=\Tr[K_{\omega} \varrho_{B} K^{\dagger}_{\omega}]\leq1.
\end{equation}

\begin{figure}[t!]
\centering
\includegraphics[width=1\linewidth]{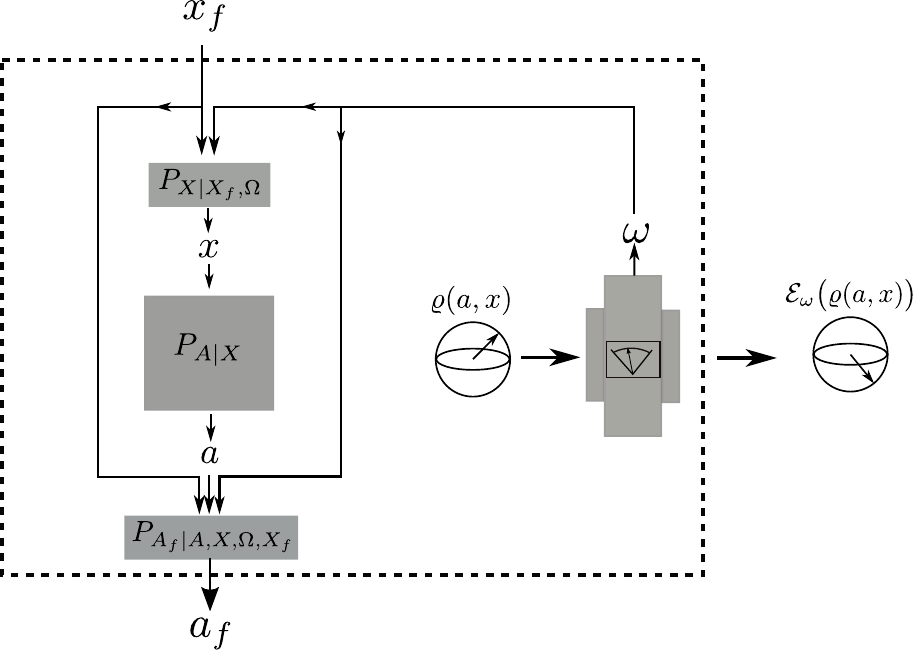}
\caption{Schematic representation of a $\mathsf{1WLOCC}$ map $\mathcal{M}$: The initial assemblage $\rho_{A|X}$ consists of a black-box, with inputs $x$ and outputs $a$, governed by the probability distribution $P_{A|X}$, in Alice's hand, and a quantum subsystem in one of the states $\{\varrho(a,x)\}_{a,x}$, in Bob's hands. The final assemblage $\rho_{A_f|X_f}=\mathcal{M}(\rho_{A|X})$ is given by a final black-box, represented by the dashed-lined rectangle,  of inputs $x_f$ and outputs $a_f$, and a final subsystem, represented outside the dashed-lined rectangle, in the state $\varrho(a_f,x_f)=\mathcal{E}_{\omega}(\varrho(a,x))$. To implement $\mathcal{M}$, first, Bob applies, with a probability $P_\Omega(\omega)$, a stochastic quantum operation $\mathcal{E}_{\omega}$ that leaves his subsystem in the state $\mathcal{E}_{\omega}(\varrho(a,x))$. He communicates $\omega$ to Alice. Then, Alice generates $x$ by processing the classical bits $\omega$ and $x_f$ with a local wiring described by a conditional distribution $P_{X|X_f,\Omega}$. She inputs $x$ to her initial device, upon which the bit $a$ is output. Finally, Alice generates the output $a_f$ of the final device by processing $x_f$, $\omega$, $x$, and $a$, with a local wiring described by a distribution $P_{A_f|A,X,\Omega,X_f}$.}
\label{fig:1}
\end{figure}

Since Alice can only process classical information, the allowed one-way communication from Bob to her must be classical too. Thus, it can only consists of the outcome $\omega$ of his quantum operation.
Classical bit processings are usually referred to as \emph{wirings}  \cite{Brunner13}. Alice's wirings map $a\in[r]$ and $x\in[s]$ into input and out bits $a_f\in[r_f]$ and $x_f\in[s_f]$, respectively, of the final assemblage, where $s_f$ and $r_f$ are natural numbers. 
The most general wirings respecting the above constraints are described by conditional probability distributions $P_{X|X_f,\Omega}$ and $P_{A_f|A,X,\Omega,X_f}$ of generating $x$ from $\omega$ and $x_f$ and $a_f$ from $x_f$, $\omega$, $x$, and $a$, respectively, as sketched in Fig. \ref{fig:1}. Finally, since, as mentioned, her wirings must be deterministic, $P_{X|X_f,\Omega}$ and $P_{A_f|A,X,\Omega,X_f}$ must be normalised probability-preserving distributions. 

All in all, the general form of the resulting maps is parametrised in the following definition (see App. \ref{sec:genericoperations} for details).

\begin{dfn}[Stochastic assemblage 1W-LOCCs]
\label{Def1WLOCCs}
We define the class $\mathsf{1WLOCC}$ of \emph{(stochastic) 1W-LOCCs} as the set of (stochastic) maps $\mathcal{M}$
that take an arbitrary assemblage $\hat{\rho}_{A|X}$ into a  final assemblage $\hat{\rho}_{A_f|X_f}\coloneqq\mathcal{M}(\hat{\rho}_{A|X})$, where
\begin{equation}
\label{eq:final_assemb}
\mathcal{M}(\hat{\rho}_{A|X})\coloneqq\sum_{\omega} (\id \otimes K_\omega) \: \mathcal{W}_{\omega}(\hat{\rho}_{A|X}) \:  (\id \otimes K_{\omega}^{\dagger}),
\end{equation}
being $\mathcal{W}_{\omega}$ a \emph{deterministic w}iring map given by
\begin{align}
\nonumber
[\mathcal{W}_{\omega} (\hat{\rho}_{A|X})](x_f)&:=\sum_{a_f,a,x} P(x|x_f,\omega)P(a_f|a,x,\omega,x_f)  \\
\label{eq:def_W}
&\times\:\: (\ketbra{a_f}{a}\otimes \id  )  \:  \hat{\rho}_{A|X}(x) \:  ( \ketbra{a}{a_f} \otimes \id),
\end{align}
with $P(x|x_f,\omega)$ and $P(a_f|a,x,\omega,x_f)$ short-hand notations for the conditional probabilities $P_{X|X_f,\Omega}(x,x_f,\omega)$ and $P_{A_f|A,X,\Omega,X_f}(a_f,a,x,\omega,x_f)$, respectively.
\end{dfn}

Note that the final assemblage \eqref{eq:final_assemb} is in general not normalised: Introducing 
\begin{equation}
\label{eq:def_M_mu}
\mathcal{M}_{\omega}(\: \cdot \:)\coloneqq(\id \otimes K_\omega) \: \mathcal{W}_{\omega}(\: \cdot \:) \:  (\id \otimes K_{\omega}^{\dagger}),
\end{equation}
such that $\mathcal{M}(\: \cdot \:)=\sum_{\omega}\mathcal{M}_{\omega}(\: \cdot \:)$, we obtain, using Eqs. \eqref{eq:quant_rep},  \eqref{eq:nosignaling}, \eqref{eq:trace_ass}, \eqref{eq:P_Omega}, \eqref{eq:final_assemb}, and \eqref{eq:def_W}, that
\begin{equation}
\label{eq:prob_tranform}
\Tr[\mathcal{M}(\hat{\rho}_{A|X})]=\sum_{\omega}\Tr[\mathcal{M}_{\omega}(\hat{\rho}_{A|X})]=\sum_{\omega}P_{\Omega}(\omega) \leq 1.
\end{equation}
As with quantum operations, the trace \eqref{eq:prob_tranform} of $\mathcal{M}(\hat{\rho}_{A|X})$ represents the probability that the physical transformation $\hat{\rho}_{A|X}\to\mathcal{M}(\hat{\rho}_{A|X})/\Tr[\mathcal{M}(\hat{\rho}_{A|X})]$ takes place. Analogously, 
the  map $\mathcal{M}_{\omega}$ describes the assemblage transformation that takes place when Bob post-selects the $\omega$-th outcome, which occurs with probability $\Tr[\mathcal{M}_{\omega}(\hat{\rho}_{A|X})]=P_{\Omega}(\omega)$. In the particular case where $\mathcal{M}$ is trace-preserving, we refer to it as a \emph{deterministic 1W-LOCC}. 

Finally, we prove in App. \ref{sec:proof_Theorem1WLOCCs} the following theorem.
\begin{thm}[$\mathsf{1WLOCC}$ invariance of $\mathsf{LHS}$]
\label{Theorem1WLOCCs}
Any map of the class $\mathsf{1WLOCC}$ takes every unsteerable assemblage into an unsteerable assemblage.
\end{thm}

\section{Physical motivation for free operations: 1W-LOCCs as safe operations in 1S-DI-QKD}
\label{sec:1WLOCC_1SDIQKD}
As shown in the previous section, stochastic 1W-LOCCs from Bob to Alice satisfy the basic requirement of mapping every unsteerable assemblage into an unsteerable assemblage. However, there may in general exist other sets of operations with this feature. In entanglement theory, for instance, apart from the LOCCs, the separable operations \cite{Vedral97}, the entanglement-assisted (catalytic) LOCCs \cite{Jonathan99}, or simply the local operations, as well as any of these supplemented with particle swapping \cite{Horodecki09}, are known not to create entanglement either. Each of the these classes of operations leads, strictly speaking, to a valid resource theory of entanglement. The choice of a given class over others is based upon actual constraints from the physical scenario in question. In what follows, we discuss the role of 1W-LOCCs as the allowed operations in 1S-DI-QKD, in the sense of being those that do not compromise the security. This gives us a physical motivation to choose 1W-LOCC as free operations for steering over other classes of operations that may also map $\mathsf{LHS}$ into itself. To this end, and for pedagogic reasons, we first discuss the allowed safe operations for QKD and fully DI QKD.

QKD consists of the extraction of a secret key from the correlations of local-measurement outcomes on a bipartite quantum state. The most fundamental constraint to which any generic QKD protocol is subject is, of course, the lack of a  private safe classical-communication channel between distant labs. If such a channel were available, the whole enterprise of QKD would be pointless. This imposes restrictions on the operations allowed, so as not to break the security of the protocol. For instance, clearly, the local-measurement outcomes cannot be communicated, as they can be intercepted by potential eavesdroppers who could, with them, crack the key.
Of particular relevance for this work are the assumptions on the measurement devices. In non-DI QKD protocols, entanglement is the resource and security is proven under the assumption that the users have a specific quantum state and perfectly characterised measurement devices \cite{Ecker91}. Knowledge of the state by an eavesdropper does not compromise the security. Therefore, prior to the measurements producing the key, the users are allowed to preprocess the state in any way and exchange information about it, for instance with LOCCs, or even to discard the state aborting the protocol run. Pre-processing LOCCs or abortions can at most provide an eavesdropper with knowledge about the state, not about the key, and therefore do not affect the security.

The situation is different in DI-QKD \cite{Deviceindependent}. There, the resource is given by Bell non-local correlations and no assumption is made either on the quantum state or the measurement devices. The users effectively hold black-box devices, whose inputs and outputs are all to which they have access. Since such inputs and outputs are precisely the bits with which the key is established, both classical communication and abortions are forbidden. Communication of outputs can directly reveal the key, as mentioned, whereas abortions and communication of inputs can, due to the locality and detection loopholes, respectively, be maliciously exploited by an eavesdropper to obtain information about the key too. 
Hence, the security constrains of DI-QKD naturally yield local classical information processing assisted by shared randomness or prior-to-input classical communication as the class of allowed operations \cite{Gallego12,deVicente14}. 

In 1S-DIQKD, in contrast, while no assumption is made on the bipartite quantum state or Alice's apparatus, Bob's measurement device is perfectly characterised.
This is effectively described by assemblages of the form given in Eq. \eqref{eq:ensembledef}. 
The asymmetry in Alice and Bob's devices leads to an asymmetry in the operations allowed to each of them. Alice is subject to the same restrictions as in both-sided DI QKD, while Bob, to those of non-DI QKD. Hence, Alice cannot abort or transmit any information, but, before measuring, Bob is allowed to implement arbitrary pre-processing quantum operations to his subsystem, including stochastic ones with possible abortions, and send any classical feedback about them to Alice. Altogether, this singles out a natural set of operations that do not compromise the security: all the assemblage transformations involving only deterministic classical maps on Alice's side and arbitrary --possibly stochastic-- quantum operations on Bob's, assisted by one-way classical communication from Bob to Alice \cite{footnote1}. These are, namely, the stochastic 1W-LOCCs from Bob to Alice  (see Fig. \ref{fig:1}). Note that shared randomness or prior-to-input classical communication from Alice to Bob \cite{Gallego12,deVicente14}, which also do not introduce any security compromise, can always be recast as 1W classical communication from Bob to Alice and need, therefore, not be explicitly considered. 

Finally, we emphasise that it is one of the measurement apparatuses what is assumed untrusted, not the users. Both users are trusted and reliably operate on their systems, carrying out the allowed assemblage transformations.

\section{Steering monotonicity}\label{sec:steeringmonotonicity}
Once the free operations for steering are established, the natural next step is to introduce an axiomatic approach to define steering measures, i.e., a set of postulates that a bona fide quantifier of steering should fulfil. 
\begin{dfn}[1W-LOCC-monotonicity and convexity]
\label{def:Mon_Conv}
A function $\mathscr{S}$, from the space of assemblages into $\mathbb{R}_{\geq0}$, is a \emph{steering monotone} if it fulfils the following two axioms:
\begin{enumerate}[i)]
\item $\mathscr{S}(\hat{\rho}_{A|X})=0$ for all $\hat{\rho}_{A|X}\in \mathsf{LHS}$.
\item $\mathscr{S}$ does not increase, on average, under deterministic 1W-LOCCs, i.e.,
\begin{equation}
\label{eq:def_strong_mon}
\sum_{\omega} P_{\Omega}(\omega) \mathscr{S}\left( \frac{\mathcal{M}_{\omega}(\hat{\rho}_{A|X})}{\Tr\left[\mathcal{M}_{\omega}(\hat{\rho}_{A|X})\right]}\right)\leq \mathscr{S}(\hat{\rho}_{A|X})
\end{equation}
for all $\hat{\rho}_{A|X}$, with $P_{\Omega}(\omega)=\Tr\left[\mathcal{M}_{\omega}(\hat{\rho}_{A|X})\right]$ and $\sum_{\omega} P_{\Omega}=1$.
\end{enumerate}
Besides, $\mathscr{S}$ is a \emph{convex steering monotone} if it additionally satisfies the  property:
\begin{enumerate}[i)]
  \setcounter{enumi}{2}
\item Given any real number $0\leq\mu\leq 1$, and assemblages $\hat{\rho}_{A|X}$ and $\hat{\rho}'_{A|X}$, then
\begin{align}
\nonumber
\mathscr{S}\left(\mu\, \hat{\rho}_{A|X}+(1-\mu)\hat{\rho}'_{A|X}\right)&\leq \mu\, \mathscr{S}\left(\hat{\rho}_{A|X}\right)\\
\label{def:convexity}
&+(1-\mu)\mathscr{S}\left(\hat{\rho}'_{A|X}\right).
\end{align}
\end{enumerate}
\end{dfn}
Condition $i)$ reflects the basic fact that unsteerable assemblages should have zero steering. Condition $ii)$ formalizes the intuition that, analogously to entanglement, steering should not increase --on average-- under 1W-LOCCs, even if the flag information $\omega$ produced in the transformation is available.
Finally, condition $iii)$ states the desired property that steering should not increase by probabilistically mixing assemblages. The first two conditions are taken as mandatory necessary conditions, the third one  only as  a convenient property. Importantly, there exists a less demanding definition of monotonicity. There, the left-hand side of Eq. \eqref{eq:def_strong_mon} is replaced by $\mathscr{S}\left(\mathcal{M}(\hat{\rho}_{A|X})/\Tr[\mathcal{M}(\hat{\rho}_{A|X})]\right)$. That is, $ii')$ it is demanded only that steering itself, instead of its average over $\omega$, is non-increasing under the free operations. 
The latter is actually the most fundamental necessary condition for a measure. However, monotonicity $ii)$ is in many cases (including the present work) easier to prove and, together with condition $iii)$, implies monotonicity $ii')$. Hence, we focus throughout on monotonicity as defined by Eq.  \eqref{eq:def_strong_mon} and refer to it simply as \emph{1W-LOCC monotonicity}. All three known quantifiers of steering, the two ones introduced in Refs. \cite{Skrzypczyk13,Piani15} as well as the one we introduce in the next section, turn out to be convex steering monotones in the sense of Definition \ref{def:Mon_Conv}.

\section{The relative entropy of steering}\label{sec:therelativeentropyofsteering}
In this section, we introduce a convex steering monotone called the \emph{relative entropy of steering}. To this end, we first define the notion of \emph{relative entropy} between assemblages. For any two density operators $\varrho$ and $\varrho'$, we first recall the \emph{quantum von-Neumann relative entropy}
\begin{equation}
\label{eq:von_Neumman_rel_ent}
S_\text{Q}(\varrho\| \varrho')\coloneqq\Tr \left[\varrho\left(\log \varrho -\log \varrho'\right)\right]
\end{equation}
 of $\varrho$ with respect to $\varrho'$ and, for any two probability distributions $P_X$ and $P'_X$, the \emph{classical relative entropy}, or \emph{Kullback-Leibler divergence},
 \begin{equation}
\label{eq:Kullback_Leibler_div}
S_\text{C}(P_X\| P'_X)\coloneqq \sum_x P_X(x) [\log P_X(x)- \log P'_X(x)]
\end{equation}
 of $P_X$ with respect to $P'_X$. 
The quantum and classical relative entropies \eqref{eq:von_Neumman_rel_ent} and \eqref{eq:Kullback_Leibler_div} measure  the distinguishability of states and distributions, respectively. 
To find an equivalent measure for assemblages, we note, for $\hat{\rho}_{A|X}(x)$ given by Eq. \eqref{eq:quant_rep} and $\hat{\rho}'_{A|X}(x)\coloneqq\sum_{a}  P'_{A|X}(a,x)\proj{a} \otimes \varrho'(a,x)$, that 
\begin{align}
\label{eq:towards_rel_ent_assemb}
\nonumber
S_\text{Q}\left(\hat{\rho}_{A|X}(x)\| \hat{\rho}'_{A|X}(x)\right)=S_\text{C}\left(P_{A|X}(\cdot,x)\| P'_{A|X}(\cdot,x)\right)&\\
+\sum_{a}P_{A|X}(a,x)S_\text{Q}\left(\varrho(a,x)\|\varrho'(a,x)\right),&
\end{align}
where $P_{A|X}(\cdot,x)$ and $P'_{A|X}(\cdot,x)$ are respectively the distributions over $a$ obtained from the conditional distributions $P_{A|X}$ and $P'_{A|X}$ for a fixed $x$.
That is, the distinguishability between the states $\hat{\rho}_{A|X}(x)$ and $\hat{\rho}'_{A|X}(x)\in\mathcal{L}(\mathcal{H}_E\otimes\mathcal{H}_B)$ equals the sum of the distinguishabilities between  $P_{A|X}(x)$ and $P'_{A|X}(x)$ and between $\varrho(a,x)$ and $\varrho'(a,x)\in\mathcal{L}(\mathcal{H}_B)$, weighted by $P_{A|X}(a,x)$ and averaged over $a$. 

The entropy \eqref{eq:towards_rel_ent_assemb}, which depends on $x$, does not measure the distinguishability between the assemblages $\rho_{A|X}$ and $\rho'_{A|X}$. Since the latter are conditional objects, i.e., with inputs, a general strategy to distinguish them must allow for Alice choosing the input for which the assemblages' outputs are optimally distinguishable. Furthermore, Bob can first apply a generalised measurement on his subsystem and communicate the outcome $\gamma$ to her, which she can then use for her input choice. This is the most general procedure within the allowed 1W-LOCCs.
Hence, a generic distinguishing strategy under 1W-LOCCs involves probabilistically chosen inputs that depend on $\gamma$. 
Note, in addition, that the statistics of $\gamma$ generated, described by distributions $P_{\Gamma}$ or $P'_{\Gamma}$, encode differences between $\rho_{A|X}$ and $\rho'_{A|X}$ too and must therefore also be accounted for by a distinguishability measure. The following definition incorporates all these considerations. 
%
\begin{dfn}[Relative entropy between assemblages]
\label{def:rel_ent_assemb}
Given any two assemblages $\rho_{A|X}$ and $\rho'_{A|X}$, we define the \emph{assemblage relative entropy} of $\rho_{A|X}$ with respect to $\rho'_{A|X}$ as
\begin{widetext}
\begin{align}
\label{eqmat:relativeentropy}
S_\text{A}(\rho_{A|X}\|\rho'_{A|X}):=\max_{P_{X|\Gamma},\{E_{\gamma}\}} \bigg[S_\text{C}(P_{\Gamma}\|P'_{\Gamma})+\sum_{\gamma,x}P(x|\gamma) P_{\Gamma}(\gamma)
S_\text{Q}\bigg( \frac{\id \otimes E_{\gamma}\hat{\rho}_{A|X}(x)\id \otimes E_{\gamma}^{\dagger}}{P_{\Gamma}(\gamma)}\: \bigg\| \:\frac{\id \otimes E_{\gamma}\hat{\rho}'_{A|X}(x)\id \otimes E_{\gamma}^{\dagger}}{P'_{\Gamma}(\gamma)} \bigg)\bigg],
\end{align}
\end{widetext}
where $E_{\gamma}:\mathcal{H}_B\to\mathcal{H}_B$ are generalised-measurement operators such that $\sum_{\gamma}E^{\dagger}_{\gamma}E_{\gamma}=\id$, $P_{X|\Gamma}$ is a conditional probability distribution of $x$ given $\gamma$, the short-hand notation $P(x|\gamma)\coloneqq P_{X|\Gamma}(x,\gamma)$ has been used, 
and
\begin{subequations}
\label{eq:Prob_gamma}
\begin{align}
P_{\Gamma}(\gamma)&\coloneqq\Tr [\id \otimes E_{\gamma}\hat{\rho}_{A|X}(x) \id \otimes E^{\dagger}_{\gamma}]=\Tr_B [E_{\gamma} \varrho_{B} E_{\gamma}^{\dagger}],\\
P'_{\Gamma}(\gamma)&\coloneqq\Tr [\id \otimes E_{\gamma} \hat{\rho}'_{A|X}(x) \id \otimes E^{\dagger}_{\gamma}]=\Tr_B [E_{\gamma} \varrho'_{B} E_{\gamma}^{\dagger}],
\end{align}
\end{subequations}
where $\varrho'_{B}$ is Bob's reduced state for the assemblage $\rho'_{A|X}$. 
\end{dfn}
In App. \ref{sec:apprelativeentropy}, we show that $S_\text{A}$ does not increase --on average-- under deterministic 1W-LOCCs and, as its quantum counterpart $S_\text{Q}$, is jointly convex. Hence, $S_\text{A}$ is a proper measure of  distinguishability between assemblages under 1W-LOCCs \cite{footnote0}. The first term inside the maximisation in Eq. \eqref{eqmat:relativeentropy} accounts for the distinguishability between the distributions of measurement outcomes $\gamma$ and the second one for that between the distributions of Alice's outputs and Bob's states resulting from each $\gamma$, averaged over all inputs and measurement outcomes. In turn, the maximisation over $\{E_{\gamma}\}$ and $P_{X|\Gamma}$ ensures that these output distributions and states are distinguished using the optimal 1W-LOCC-compatible strategy.

We are now in a good position to introduce a convex steering monotone. We do it with a theorem.

\begin{thm}[1W-LOCC-monotonicity and convexity of $\mathscr{S}_{\text{R}}$] 
\label{theo: rel_ent}
The \emph{relative entropy of steering} $\mathscr{S}_{\text{R}}$, defined for an assemblage $\rho_{A|X}$ as 
\begin{align}
\label{eq:def_rel_ent_steering}
\mathscr{S}_{\text{R}}(\rho_{A|X}):=\min_{\sigma_{A|X} \in \mathsf{LHS}}S_\text{A}(\rho_{A|X}\parallel\sigma_{A|X}),
\end{align}
is a convex steering monotone.
\end{thm}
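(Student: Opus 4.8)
The plan is to verify, one by one, the three defining properties of a convex steering monotone in Definition~\ref{def:Mon_Conv}, relying on two inputs established elsewhere: Theorem~\ref{TheoremSNIOs}, which states that the free operations preserve $\mathsf{LHS}$, and the properties of the assemblage relative entropy $S_\text{A}$ proved in App.~\ref{sec:apprelativeentropy}. Concretely, I will use that $S_\text{A}$ is non-negative, that it vanishes when its two arguments coincide, that it does not increase on average under deterministic SNIOs, and that it is jointly convex. The overall structure follows the standard ``distance to the free set'' argument familiar from the relative entropy of entanglement: the minimization over $\mathsf{LHS}$ in Eq.~\eqref{eq:def_rel_ent_steering} converts contractivity of the underlying divergence into monotonicity of $\mathscr{S}_{\text{R}}$.

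Axiom $i)$ is immediate. Since each term inside the maximization in Eq.~\eqref{eqmat:relativeentropy} is a classical or quantum relative entropy, $S_\text{A}\geq0$ and hence $\mathscr{S}_{\text{R}}\geq0$, so $\mathscr{S}_{\text{R}}$ indeed maps into $\mathbb{R}_{\geq0}$. If $\rho_{A|X}\in\mathsf{LHS}$ it is itself admissible in the minimization, so $\mathscr{S}_{\text{R}}(\rho_{A|X})\leq S_\text{A}(\rho_{A|X}\|\rho_{A|X})=0$, the last equality holding because every relative-entropy term vanishes on coinciding arguments. Thus $\mathscr{S}_{\text{R}}(\rho_{A|X})=0$.

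For axiom $ii)$, the key idea is to carry the optimal free assemblage along through the operation. Let $\sigma^\star_{A|X}\in\mathsf{LHS}$ attain the minimum, so $\mathscr{S}_{\text{R}}(\rho_{A|X})=S_\text{A}(\rho_{A|X}\|\sigma^\star_{A|X})$, and apply a deterministic SNIO $\mathcal{M}=\sum_\omega\mathcal{M}_\omega$ to both $\rho_{A|X}$ and $\sigma^\star_{A|X}$. Writing $\rho_\omega\coloneqq\mathcal{M}_\omega(\rho_{A|X})/\Tr[\mathcal{M}_\omega(\rho_{A|X})]$ and $\sigma_\omega\coloneqq\mathcal{M}_\omega(\sigma^\star_{A|X})/\Tr[\mathcal{M}_\omega(\sigma^\star_{A|X})]$, Theorem~\ref{TheoremSNIOs} applied branch by branch gives $\sigma_\omega\in\mathsf{LHS}$, so each $\sigma_\omega$ is a feasible point of the minimization defining $\mathscr{S}_{\text{R}}(\rho_\omega)$, whence $\mathscr{S}_{\text{R}}(\rho_\omega)\leq S_\text{A}(\rho_\omega\|\sigma_\omega)$. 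Multiplying by $P_\Omega(\omega)=\Tr[\mathcal{M}_\omega(\rho_{A|X})]$, summing over $\omega$, and using the average non-increase of $S_\text{A}$ under deterministic SNIOs bounds the result by $S_\text{A}(\rho_{A|X}\|\sigma^\star_{A|X})=\mathscr{S}_{\text{R}}(\rho_{A|X})$, which is precisely Eq.~\eqref{eq:def_strong_mon}.

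For axiom $iii)$, I combine convexity of $\mathsf{LHS}$ with joint convexity of $S_\text{A}$. Let $\sigma^\star_{A|X}$ and $\tilde{\sigma}^\star_{A|X}$ be optimal free assemblages for $\rho_{A|X}$ and $\rho'_{A|X}$. A convex mixture of LHS assemblages is again LHS, since the model in Eq.~\eqref{eq:lhs} is preserved under mixing, so $\mu\,\sigma^\star_{A|X}+(1-\mu)\tilde{\sigma}^\star_{A|X}\in\mathsf{LHS}$ is feasible for $\mathscr{S}_{\text{R}}(\mu\,\rho_{A|X}+(1-\mu)\rho'_{A|X})$; joint convexity of $S_\text{A}$ then yields Eq.~\eqref{def:convexity}. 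The only genuinely delicate points lie in the imported ingredients rather than in this top-level argument: establishing the average monotonicity of $S_\text{A}$ under SNIOs, and justifying that each post-selected branch $\mathcal{M}_\omega$, once normalized, is itself a legitimate deterministic SNIO so that Theorem~\ref{TheoremSNIOs} may be invoked branch by branch and not only for the full map $\mathcal{M}$. I expect the former to be the main obstacle; both are handled in App.~\ref{sec:apprelativeentropy}.
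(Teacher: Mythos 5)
Your proposal is correct and follows essentially the same route as the paper's proof: axiom $i)$ from positivity of $S_\text{A}$ and feasibility of $\rho_{A|X}$ itself, axiom $ii)$ by pushing the optimal $\sigma^\star_{A|X}$ through the SNIO, using Theorem~\ref{TheoremSNIOs} branch by branch for feasibility and then Lemma~\ref{lem:strongmonotonicitydistance}, and axiom $iii)$ from convexity of $\mathsf{LHS}$ plus joint convexity of $S_\text{A}$ (Lemma~\ref{lem:convex}). The two ``delicate points'' you flag are indeed exactly the content of App.~\ref{sec:apprelativeentropy} and of the fact that each $\mathcal{M}_\omega$ is itself a stochastic SNIO, so nothing is missing.
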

\noindent The theorem is proven in App. \ref{sec:relativeentropy}.

\section{Other convex steering monotones}
\label{sec:other_measures}
Apart from $\mathscr{S}_{\text{R}}$ two other quantifiers of steering have been recently proposed: the steerable weight \cite{Skrzypczyk13} and the robustness of steering \cite{Piani15}. In this section, we show that these are also convex steering monotones.
\begin{dfn}[Steerable weight \cite{Skrzypczyk13}] The steerable weight $\mathscr{S}_{\rm{W}}(\rho_{A|X})$ of a normalised assemblage $\rho_{A|X}$ is the minimum $\nu\in\mathbb{R}_{\geq0}$ such that
\begin{equation}
\label{eq:decompsw}
\rho_{A|X}= \nu\, \tilde{\rho}_{A|X} + (1-\nu) \sigma_{A|X},
\end{equation}
with $\tilde{\rho}_{A|X}$ an arbitrary normalised assemblage and normalised $\sigma_{A|X}\in \mathsf{LHS}$.
\end{dfn}

\begin{dfn}[Robustness of steering \cite{Piani15}] The robustness of steering $\mathscr{S}_{\rm{Rob}}(\rho_{A|X})$ of a normalised assemblage $\rho_{A|X}$ is the minimum $\nu\in\mathbb{R}_{\geq0}$ such that the normalised assemblage
\begin{equation}
\label{eq:defrobust}
\sigma_{A|X}\coloneqq \frac{1}{1+\nu} \rho_{A|X}+\frac{\nu}{1+\nu}\, \tilde{\rho}_{A|X}
\end{equation}
belongs to $\mathsf{LHS}$, with $\tilde{\rho}_{A|X}$ an arbitrary normalised assemblage.
\label{def:Rob}
\end{dfn}

In App. \ref{app:other_measures}, we prove the following theorem.
\begin{thm}[1W-LOCC-monotonicity and convexity of $\mathscr{S}_{\rm{W}}$ and  $\mathscr{S}_{\rm{Rob}}$]  Both $\mathscr{S}_{\rm{W}}$ and  $\mathscr{S}_{\rm{Rob}}$ are convex steering monotones. 
\label{theo:W_and_Rob}
\end{thm}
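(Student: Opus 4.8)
The plan is to verify, for each of $\mathscr{S}_{\text{W}}$ and $\mathscr{S}_{\text{Rob}}$, the three clauses of Definition~\ref{def:Mon_Conv}, treating both measures in parallel since the arguments are structurally identical. Clause $i)$ is immediate: if $\rho_{A|X}\in\mathsf{LHS}$, then $\nu=0$ is feasible in Eqs.~\eqref{eq:decompsw} and~\eqref{eq:defrobust} (take $\sigma_{A|X}=\rho_{A|X}$ for the weight, and $\tilde{\rho}_{A|X}$ arbitrary for the robustness), so both quantifiers vanish. The real content lies in clauses $ii)$ and $iii)$.

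For convexity $iii)$, I would start from optimal decompositions of $\rho_{A|X}$ and $\rho'_{A|X}$ and combine them. For the weight, writing $\rho_{A|X}=\nu\tilde{\rho}_{A|X}+(1-\nu)\sigma_{A|X}$ and $\rho'_{A|X}=\nu'\tilde{\rho}'_{A|X}+(1-\nu')\sigma'_{A|X}$, the mixture $\mu\rho_{A|X}+(1-\mu)\rho'_{A|X}$ splits into a part of total weight $\mu\nu+(1-\mu)\nu'$ and a combination of $\sigma_{A|X},\sigma'_{A|X}$; since $\mathsf{LHS}$ is convex (merge the local-hidden variables in Eq.~\eqref{eq:lhs}), the latter is again in $\mathsf{LHS}$, certifying $\mathscr{S}_{\text{W}}(\mu\rho_{A|X}+(1-\mu)\rho'_{A|X})\le\mu\nu+(1-\mu)\nu'$. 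For the robustness the same mixing, now of the identities $\rho_{A|X}+\nu\tilde{\rho}_{A|X}=(1+\nu)\sigma_{A|X}$, reproduces a defining relation with $\nu$ replaced by $\mu\nu+(1-\mu)\nu'$; the only point to check is that the added and the total parts regroup into a normalized assemblage and an $\mathsf{LHS}$ element respectively, which holds by affinity of the traces.

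For monotonicity $ii)$, the key structural input is that each single-outcome component $\mathcal{M}_{\omega}$ of a deterministic SNIO, Eq.~\eqref{eq:def_M_mu}, is itself a stochastic SNIO (one Kraus operator $K_\omega$ with $K_\omega^\dagger K_\omega\le\id$, one wiring) and hence, by Theorem~\ref{TheoremSNIOs}, sends $\mathsf{LHS}$ into $\mathsf{LHS}$. I would apply the linear map $\mathcal{M}_{\omega}$ to the optimal decomposition of $\rho_{A|X}$, normalize by $P_{\Omega}(\omega)=\Tr[\mathcal{M}_{\omega}(\rho_{A|X})]$, and read off a feasible decomposition of the normalized output. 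For the weight this gives $\mathscr{S}_{\text{W}}(\mathcal{M}_{\omega}(\rho_{A|X})/P_{\Omega}(\omega))\le\nu\,\Tr[\mathcal{M}_{\omega}(\tilde{\rho}_{A|X})]/P_{\Omega}(\omega)$, the $\mathsf{LHS}$ part of the decomposition staying $\mathsf{LHS}$ after $\mathcal{M}_\omega$. For the robustness, the transformed identity is again of the defining form $\rho_{A|X}^{(\omega)}+\nu_{\omega}\tilde{\rho}_{A|X}^{(\omega)}=(1+\nu_{\omega})\sigma_{A|X}^{(\omega)}$ with $\nu_{\omega}=\nu\,\Tr[\mathcal{M}_{\omega}(\tilde{\rho}_{A|X})]/P_{\Omega}(\omega)$ and $\sigma_{A|X}^{(\omega)}\in\mathsf{LHS}$, so $\mathscr{S}_{\text{Rob}}(\cdot)\le\nu_{\omega}$. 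Averaging over $\omega$ with weights $P_{\Omega}(\omega)$, the outcome-traces recombine into $\Tr[\mathcal{M}(\tilde{\rho}_{A|X})]$, which equals $1$ because $\mathcal{M}$ is trace-preserving and $\tilde{\rho}_{A|X}$ is normalized; both averages therefore collapse to $\nu$, establishing Eq.~\eqref{eq:def_strong_mon}.

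The step I expect to be the main obstacle is the normalization bookkeeping in clause $ii)$: one must verify that the outcome probabilities of the three components of each optimal decomposition satisfy the same affine relation as the unnormalized assemblages, so that dividing by $P_{\Omega}(\omega)$ yields a genuine decomposition of the normalized output --- in particular, for the robustness, that $1+\nu_{\omega}=(1+\nu)\Tr[\mathcal{M}_{\omega}(\sigma_{A|X})]/P_{\Omega}(\omega)$, which follows by applying $\Tr\circ\mathcal{M}_{\omega}$ to the defining identity. Everything else reduces to linearity of the $\mathcal{M}_{\omega}$, convexity of $\mathsf{LHS}$, and trace preservation of the deterministic $\mathcal{M}$.
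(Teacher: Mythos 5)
Your proposal is correct and follows essentially the same route as the paper's proof: condition $i)$ directly from the definitions, condition $iii)$ by mixing optimal decompositions and using convexity of $\mathsf{LHS}$, and condition $ii)$ by applying each $\mathcal{M}_{\omega}$ to an optimal decomposition, invoking $\mathsf{LHS}$-invariance (Theorem~\ref{TheoremSNIOs}), and renormalizing. The only difference is bookkeeping in the averaging step of $ii)$: the paper bounds each post-selected weight by $\mathscr{S}_{\text{W}}(\rho_{A|X})$ through an intermediate quantity $\nu^{*}_{\omega}$ and then uses $\sum_{\omega}P_{\Omega}(\omega)\leq 1$, whereas you track the exact renormalized weights $\nu\,\Tr[\mathcal{M}_{\omega}(\tilde{\rho}_{A|X})]/P_{\Omega}(\omega)$ and let them recombine to $\nu$ via trace preservation of $\mathcal{M}$ --- which is, if anything, the more careful version, since those per-outcome weights need not individually stay below $\nu$.
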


To end up with, we note that a steering measure for assemblages containing continuous-variable (CV)  bosonic systems in Gaussian states has very recently appeared \cite{Kogias14}. Even though our formalism can be straightforwardly extended to CV systems, such extension is outside the scope of the present paper.

\section{Assemblage conversions and no steering bits}\label{sec:assemblageconversions}
We say that $\Psi_{A|X}$ and $\Psi'_{A|X}$ are \emph{pure assemblages} if they are of the form 
\begin{subequations}
\begin{align}
\label{eq:pure_ensembledef}
\Psi_{A|X}&\coloneqq\{P_{A|X}(a,x),\ketbra{\psi(a,x)}{\psi(a,x)}\}_{a,x},\\
\Psi'_{A|X}&\coloneqq\{P'_{A|X}(a,x),\ketbra{\psi'(a,x)}{\psi'(a,x)}\}_{a,x},
\end{align}
\end{subequations} 
where $\ket{\psi(a,x)}$ and $\ket{\psi'(a,x)}\in\mathcal{H}_B$, and \emph{pure orthogonal assemblages} if, in addition, $\braket{\psi(a,x)}{\psi(\tilde{a},x)}=\delta_{a\, \tilde{a}}=\braket{\psi'(a,x)}{\psi'(\tilde{a},x)}$ for all $x$
. Note that pure orthogonal assemblages are the ones obtained when Alice and Bob share a pure maximally entangled state and Alice performs a von-Neumann measurement on her share. We present two theorems about assemblage conversions under 1W-LOCCs. 

The first one, proven in App.~\ref{sec:assemb_transformations}, establishes necessary and sufficient conditions for stochastic-1W-LOCC conversions between pure orthogonal assemblages, therefore playing a similar role here to the one played in entanglement theory by Vidal's theorem  \cite{Vidal99} for stochastic-LOCC pure-state conversions.
\begin{thm}[Criterion for stochastic-1W-LOCC conversion]
\label{thm:aseemb_tranform}
Let $\Psi_{A|X}$ and $\Psi'_{A|X}$ be any two pure orthogonal assemblages with $d=s=r=2$. Then, $\Psi_{A|X}$ can be transformed into $\Psi'_{A|X}$ by a stochastic 1W-LOCC iff: either $\Psi'_{A|X}\in\mathsf{LHS}$ or $P'_{A|X}=P_{A|X}$ and 
\begin{equation}
\label{eq:overlap}
 |\braket{\psi'(a,0)}{\psi'(a,1)}|=|\braket{\psi(a\oplus\alpha,0)}{\psi(a\oplus\alpha,1)}|\ \forall\ a,
\end{equation}
for some $\alpha\in\{0,1\}$.
\end{thm}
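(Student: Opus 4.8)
The plan is to prove both directions after first reducing everything to a single scalar invariant. The key preliminary observation is that for $d=s=r=2$ the pair $\{\ket{\psi(0,x)},\ket{\psi(1,x)}\}$ is an orthonormal basis of $\CC^2$ for each input $x$, so the change-of-basis matrix $c_{ab}\coloneqq\braket{\psi(a,0)}{\psi(b,1)}$ is $2\times2$ unitary. Hence $|c_{00}|=|c_{11}|=:\cos\theta$ and $|c_{01}|=|c_{10}|=\sin\theta$, a single angle $\theta\in[0,\pi/2]$ encodes all cross-input overlaps, and condition \eqref{eq:overlap} is equivalent to $\theta=\theta'$. Moreover, if $\theta\in(0,\pi/2)$ the two bases are distinct and non-orthogonal, which forces Bob's reduced state $\varrho_B$ (necessarily diagonal in both) to be $\id/2$; this makes the probabilities uniform, $P_{A|X}(a,x)=1/2$, and the assemblage steerable, while $\theta\in\{0,\pi/2\}$ gives coinciding bases and an $\mathsf{LHS}$ assemblage. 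Thus for a steerable pure orthogonal assemblage the distribution is automatically uniform, so the real content is that $\theta$ is the complete stochastic-$\mathsf{SNIO}$ invariant for steerable targets.

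For the ``if'' direction I would treat the two disjuncts separately. If $\Psi'_{A|X}\in\mathsf{LHS}$, I would realize its decomposition \eqref{eq:lhs} by a deterministic $\mathsf{SNIO}$: Bob generates the hidden variable $\lambda$ locally (appending an ancilla, rotating it to $\sum_\lambda\sqrt{P_\Lambda(\lambda)}\ket{\lambda}$ and measuring), resets his qubit to $\xi(\lambda)$ independently of the input, and forwards $\lambda$ as the flag $\omega$; Alice's deterministic wiring then outputs $a_f$ according to $P_{A|X\Lambda}$, reproducing $\Psi'_{A|X}$ exactly and showing every unsteerable assemblage is freely reachable. For the second disjunct I would exhibit an explicit unitary branch: picking $U_0$ with $U_0\ket{\psi(a,0)}=\ket{\psi'(a,0)}$ and a diagonal rephasing $D$ absorbing the residual relative phase between the two input bases, the single unitary $U=U_0D$ satisfies $U\ket{\psi(a,x)}\propto\ket{\psi'(a,x)}$ for all $a,x$; this is exactly where $\theta=\theta'$ enters, since matched angles let one rephasing align both bases. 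Bob applying $U$ with trivial wiring is then a deterministic $\mathsf{SNIO}$ realizing the conversion.

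The ``only if'' direction is the heart of the argument and the main obstacle. Assuming conversion with nonzero probability, I would condition on a single successful outcome $\omega$, reducing the transformation to one Kraus operator $K\coloneqq K_\omega$ on Bob's side composed with a deterministic wiring given by an input map $x=g(x_f)$ and an output map $a_f=h(a,x_f)$. I would then argue that a steerable pure orthogonal \emph{output} forces this branch to be proportional to a unitary. Purity rules out rank-one $K$ (which collapses all states to one ray, $\theta'=0$) and any non-injective $h(\cdot,x_f)$ (which merges two orthogonal states into a rank-two mixture), so $K$ is full rank and each $h(\cdot,x_f)$ is a bijection. Steerability forbids a constant $g$ (both final bases would descend from a single input basis, giving $\theta'=0$), so $g$ is a bijection and the final bases are $KB_0$ and $KB_1$ up to relabeling. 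Orthonormality of both then means $M\coloneqq K^\dagger K$ is diagonal in both $B_0$ and $B_1$; as these are distinct for $\theta\in(0,\pi/2)$, $M\propto\id$, i.e.\ $K$ is a scalar multiple of a unitary, which preserves all overlap magnitudes and yields $\theta'=\theta$, while uniformity gives $P'=P$.

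The principal difficulty I would treat carefully is precisely this rigidity step: showing that purity, orthogonality and steerability of the target jointly eliminate every non-unitary branch (rank reduction, output merging, input collapse). A secondary subtlety is that the output relabeling $h$ can swap the two outcomes of one input, exchanging $\cos\theta$ and $\sin\theta$; since such relabelings are themselves free, $\theta$ is preserved only up to this trivial symmetry, and I would fix a labeling convention so that \eqref{eq:overlap} captures the invariant. Finally, I would note that permitting a coarse-grained successful outcome (summing several Kraus branches) cannot help, since a sum of rank-one contributions stays pure only if every contributing branch produces the same ray, which reduces to the single-branch analysis above.
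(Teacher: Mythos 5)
Your proof is correct in outline and reaches the statement by a genuinely different core mechanism than the paper. The paper's ``only if'' argument first proves that, for a steerable target, Alice's wirings must collapse to $\{0,1\}$-valued (functional) distributions (its Eqs.~\eqref{eq:P_delta}), then enumerates the possible assignments of source states to target states, fixes an explicit parametrisation of all eight states in a Bloch plane, and solves trigonometric compatibility equations case by case. You replace the trigonometry by an operator-algebraic rigidity step: a bijective assignment forces $M=K^{\dagger}K$ to be diagonal in both source bases, and a positive qubit operator diagonal in two distinct bases is proportional to $\id$, so $K$ is proportional to a unitary and hence preserves all overlaps. This is cleaner, label-free, and the step most likely to survive beyond $d=s=r=2$; your preliminary reduction (unitarity of the overlap matrix, $\varrho_B=\id/2$, hence uniform probabilities for any steerable pure orthogonal assemblage) likewise subsumes the paper's case dichotomy, and your ``if'' direction coincides with the paper's.

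Two points need tightening when you flesh this out. First, you assume from the outset that the conditioned branch acts through \emph{functional} wirings $x=g(x_f)$, $a_f=h(a,x_f)$; in the $\mathsf{SNIO}$ class, however, ``deterministic'' only means probability-preserving, so the wirings are a priori arbitrary conditional distributions. That they must become $\{0,1\}$-valued whenever the target is steerable is precisely what the paper proves in Eqs.~\eqref{eq:P_delta}, and the proof is your own rank-one-sum lemma: every positively weighted term $K_{\omega}\proj{\psi(a,x)}K_{\omega}^{\dagger}$ must be proportional to the single target projector, so a wiring whose support straddles both values of $x_f$ forces two target states of different inputs onto the same ray, i.e.\ an $\mathsf{LHS}$ target. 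You have the tool, but it must be invoked at this stage, not only for coarse-grained outcomes. Second, your relabeling remark is substantive rather than cosmetic: an $x_f$-dependent output flip is itself a free operation that maps the overlap angle $\theta$ to $\pi/2-\theta$, so it converts $\Psi_{A|X}$ into a steerable pure orthogonal assemblage violating Eq.~\eqref{eq:overlap} as literally written. The paper's enumeration a)--c) omits this ``crossed'' bijective assignment, so the overlap condition really holds only up to independent relabelings of each input's outcomes; your proposal to fix a labeling convention is the more careful treatment of this point.
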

\noindent In other words, no pure orthogonal assemblage of minimal dimension can be obtained via a 1W-LOCC, not even probabilistically, from a pure orthogonal assemblage of minimal dimension with a different state-basis overlap (except for trivial relabellings of $a$, given by $\alpha$) unless the former is unsteerable.
Hence, each state-basis overlap defines an inequivalent class of steering, there being infinitely many of them. This is in a way reminiscent to the inequivalent classes of entanglement in multipartite \cite{Duer99} or infinite-dimensional bipartite \cite{Owari04} systems, but here the phenomenon is found already for bipartite systems of minimal dimension.

The second theorem, proven in App. \ref{sec:no_steering_bits}, rules out the possibility of there being a (non-orthogonal) minimal-dimension pure assemblage from which all assemblages can be obtained. 
\begin{thm}[Non-existence of steering bits]
\label{thm:nosteeringbig}
There exists no pure assemblage with $d=s=r=2$ that can be transformed into any assemblage by stochastic 1W-LOCCs. 
\end{thm}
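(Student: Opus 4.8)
The plan is to assume, for contradiction, that a steering bit $\Psi_{A|X}$ with $d=s=r=2$ exists, and to show it would have to reach a continuum of inequivalent pure orthogonal assemblages, which is impossible. First I would invoke Theorem~\ref{TheoremSNIOs}: since $\mathsf{SNIO}$s map $\mathsf{LHS}$ into $\mathsf{LHS}$, an unsteerable source can only produce unsteerable assemblages, so a steering bit must itself be steerable and its four pure states must span $\mathcal{H}_B=\mathbb{C}^2$. Next I would reduce the action of a stochastic $\mathsf{SNIO}$ on a pure assemblage to a single successful branch $\mathcal{M}_\omega$, consisting of one Kraus operator $K$ on Bob's qubit together with a deterministic wiring on Alice's side. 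Feeding the pure assemblage through Eqs.~\eqref{eq:final_assemb}--\eqref{eq:def_W} shows that every output state is rank one and proportional to $K\ket{\psi(a,x)}$ for the source label $(a,x)$ selected by the wiring. Hence all output states lie in the image of $K$, and for the output to be steerable $K$ must be invertible and the wiring must route both source inputs into the two final inputs: if a single source input were used, or if $K$ had rank one, the two final measurements would share one basis or collapse to a single ray, i.e.\ the output would lie in $\mathsf{LHS}$.

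Now I would use that a steering bit must reach, in particular, every steerable pure orthogonal assemblage of minimal dimension. As noted after Theorem~\ref{thm:aseemb_tranform}, these are governed by the overlap parameter $c=|\braket{\psi'(a,0)}{\psi'(a,1)}|\in(0,1)$ and form a continuum of inequivalent classes. Reaching such a target means that, after applying the invertible $K$, the images $K\ket{\psi(0,0)},K\ket{\psi(1,0)}$ are orthogonal and likewise $K\ket{\psi(0,1)},K\ket{\psi(1,1)}$ are orthogonal. Representing qubit states as points of the Riemann sphere $\mathbb{CP}^1$, orthogonality is antipodality ($z\mapsto-1/\bar z$), so the four output points must form two antipodal pairs.

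The key step is a cross-ratio argument. An invertible $K\in GL(2,\mathbb{C})$ acts on $\mathbb{CP}^1$ as a M\"obius transformation, so the cross-ratio of the four source points equals that of the four output points. A direct computation for two antipodal pairs --- placing the first at $0,\infty$ and the second at $t,-1/\bar t$ with $|t|^2=(1-c^2)/c^2$ --- gives cross-ratio $\lambda=-(1-c^2)/c^2$, a strictly monotone function of $c$. Thus the source cross-ratio, a fixed number, determines the reachable overlap $c$ up to the finitely many values permitted by the anharmonic (cross-ratio permutation) group. A continuum of overlaps would require a continuum of cross-ratios, contradicting M\"obius invariance; hence $\Psi_{A|X}$ cannot reach all steerable orthogonal assemblages and is not a steering bit. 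As a consistency check, for an orthogonal source this reasoning recovers the overlap-preservation of Theorem~\ref{thm:aseemb_tranform}.

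The main obstacle I anticipate is the case analysis for degenerate configurations, where the clean cross-ratio picture breaks down: coincident source points (making the cross-ratio ill-defined or landing on $0,1,\infty$), rank-deficient $K$, or wirings sending a single source input to both final inputs. I would dispatch these by showing they yield only unsteerable or overlap-frozen outputs, so they cannot produce the required continuum either. Combining the generic cross-ratio obstruction with these edge cases completes the contradiction and proves the theorem.
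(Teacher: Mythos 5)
Your proposal is sound, and its core mechanism is genuinely different from the paper's. The paper (App.~\ref{sec:no_steering_bits}) follows the same structural reduction you describe --- restrict to a single nonzero branch $\mathcal{M}_\omega$, show the wiring must act as a deterministic relabeling and that the assignment \eqref{eq:assignment} is forced, split into the invertible-$K$ and rank-one-$K$ cases --- but then finishes by brute force: it parametrizes the source states by angles $(\varphi_{ax},\alpha_{ax})$ as in \eqref{eq:psi_varphi}, extracts matrix-element constraints on $K^\theta_\omega$, and derives the $K$-independent identity \eqref{eq:conditiontrig}, which must hold for every $\theta$; imposing it for two distinct values $\theta_1\neq\theta_2$ forces $\tan(\varphi_{10})=\tan(\varphi_{01})e^{i\alpha_{01}}=\tan(\varphi_{11})e^{i\alpha_{11}}$, i.e.\ $\Psi_{A|X}\in\mathsf{LHS}$, a contradiction. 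You replace this trigonometric computation with a coordinate-free invariant: an invertible $K$ acts on $\mathbb{CP}^1$ as a M\"obius map, so the cross-ratio of the four source rays is preserved, while the target family $\Psi^\theta_{A|X}$ of Eq.~\eqref{eq:psi_theta} has cross-ratio $-\tan^2(\theta)$ (equivalently $-(1-c^2)/c^2$), injective in $\theta\in\ ]0,\pi/2[$; since relabelings only move the cross-ratio within its finite anharmonic orbit, a fixed source can reach at most six overlap classes, not a continuum. This buys something the paper's argument does not state explicitly: a quantitative bound on how many inequivalent orthogonal classes \emph{any} fixed pure assemblage can reach, and a conceptual identification of the obstruction as a complete projective invariant (it also re-derives the overlap-preservation direction of Theorem~\ref{thm:aseemb_tranform} for invertible branches). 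The paper's computation, by contrast, is elementary, self-contained, and pins down exactly which degenerate source ($\mathsf{LHS}$) the constraints collapse to.

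Two points in your sketch would need to be made rigorous to match the paper's level of detail. First, the reduction to a single branch: a priori one branch could produce only some of the four target states (contributing zero to the others); you need the observation that each nonzero branch's contribution is itself no-signaling, and since $c_{00}\proj{\psi^\theta(0,0)}+c_{10}\proj{\psi^\theta(1,0)}=c_{01}\proj{\psi^\theta(0,1)}+c_{11}\proj{\psi^\theta(1,1)}$ forces $c_{00}=c_{10}=c_{01}=c_{11}$ when $0<\theta<\pi/2$, every nonzero branch reproduces the full target. Second, your phrase ``the source label $(a,x)$ selected by the wiring'' presupposes that the wiring cannot mix several source labels into one output; this follows from rank-one-ness of the targets plus injectivity of the ray map induced by invertible $K$ (the paper's Eqs.~\eqref{eq:P_delta_theta}), so it belongs with your invertible-$K$ case rather than only with the degenerate ones. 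With these two gaps filled, and your pigeonhole treatment of coincident source rays and rank-deficient $K$, the argument is complete.
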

\noindent Hence, among the minimal-dimension assemblages there is no operationally well defined \emph{unit of steering}, or \emph{steering bit}, i.e., an assemblage from which all assemblages can be obtained for free and can therefore be taken as a measure-independent maximally steerable assemblage. This is again in striking contrast to entanglement theory, where pure maximally entangled states can be defined without the need of entanglement quantifiers and each one can be transformed into any state by deterministic LOCCs \cite{Vidal99,Nielsen99}.


\section{Discussion and outlook}
\label{sec:discussionandoutlook}
We have introduced the resource theory of Einstein-Podolsky-Rosen steering. The free operations of the theory are the 1W-LOCCs from the quantum part to the black box, i.e.,  
all the assemblage transformations involving deterministic bit wirings on Alice's side and stochastic quantum operations on Bob's assisted by 1-way classical communication from Bob to Alice.
These operations satisfy the basic requirement of mapping all unsteerable assemblages into unsteerable assemblages and are, besides, also the allowed operations that naturally arise from the basic security constraints of one-sided device-independent QKD, where steering is a physical resource. With these operations, we introduced the notion of convex steering monotones, presented the relative entropy of steering as a convenient example thereof, and proved monotonicity and convexity of two other previously proposed steering measures. In addition, for minimal-dimensional systems, we established necessary and sufficient conditions for stochastic-1W-LOCC conversions between pure-state assemblages and proved the non-existence of steering bits.

It is instructive to emphasise that the derived 1W-LOCCs are hybrids between the operations that map separable states into separable states, stochastic LOCCs, and those that map Bell local correlations into Bell local correlations, local wirings assisted by prior-to-input classical communication \cite{Gallego12,Navascues14,deVicente14}. In fact,  
our findings are also potentially useful for the quantification of Bell non-locality. 
In addition, our work offers a number of challenges for future research. Namely, for example, the non-existence of steering bits of minimal dimension can be seen as an impossibility of steering dilution of minimal-dimension assemblages in the single-copy regime. We leave as open questions what the rules for steering dilution and distillation are for higher-dimensional systems, mixed-state assemblages, or in asymptotic multi-copy regimes, and what the steering classes are for mixed-state assemblages. Moreover, other fascinating questions are whether one can formulate a notion of bound steering or an analogue to the positive-partial-transpose criterion for assemblages.


\section*{Acknowledgements}

We would like to thank Antonio Ac\'in, Daniel Cavalcanti, Paul Skrzypczyk and Marco T\'ulio Quintino for discussions and the EU (RAQUEL, SIQS) for support. RG acknowledges support from the Alexander von Humboldt Foundation and LA from the EU (REQS - Marie Curie IEF No 299141).

\appendix

\section{Parametrisation of the class $\mathsf{1WLOCC}$}
\label{sec:genericoperations}
In this appendix, we show that any generic assemblage map $\mathcal{M}$ involving stochastic local quantum operations on Bob's side, one-way classical communication from Bob to Alice, and deterministic (probability-preserving) local wirings on Alice's side is of the form given by Eqs. \eqref{eq:final_assemb} and \eqref{eq:def_W} and, therefore, belongs to the class the class $\mathsf{1WLOCC}$ of Definition \ref{Def1WLOCCs}. 

Without loss of generality, such $\mathcal{M}$ can be decomposed into the following sequence (see Fig. \ref{fig:1}) of operations:

\begin{enumerate}
\item Bob applies an arbitrary stochastic generalised measurement, described by a  completely-positive non trace-preserving  map $\mathcal{E}
$, defined by Eqs. \eqref{eq:Kraus_def}, to his quantum subsystem before Alice introduces an input to her device. 
Note that, since the non-signalling condition \eqref{eq:nosignaling} is fulfilled,  Bob has a well defined reduced quantum state $\rho_{B}$, given by Eq. \eqref{eq:nosignaling}, independently of Alice still not having  chosen her measurement input $x$.
Therefore, Bob's measurement gives the outcome $\omega$ with the $x$-independent probability $P_\Omega(\omega)$ given by Eq. \eqref{eq:P_Omega}.

\item Bob sends the outcome $\omega$ to Alice. Alice applies a local wiring, described by the normalised conditional probability distribution $P_{X|X_f,\Omega}$, to the input $x_f$ of the final device and to $\omega$, and uses the output of this wiring as the input $x$ of her initial device. For a given $x$, her initial device outputs $a$ with a probability determined by the conditional distribution $P_{A|X}$ of the initial assemblage. In that case, Bob's normalized state is given by the $x_f$-independent density operator
\begin{equation}\label{eq:condstate}
\varrho(a,x,\omega,x_f)\coloneqq\frac{K_{\omega} \varrho(a,x) K^{\dagger}_{\omega}}{\Tr[K_{\omega} \varrho(a,x) K^{\dagger}_{\omega}]}.
\end{equation}

\item  Alice applies a local wiring, described by the normalised conditional probability distribution $P_{A_f|A,X, \Omega, X_f}$, to all the previously generated classical bits, $a$, $x$, $\omega$, and $x_f$, and uses the output of this wiring as the output $a_f$ of her final device.
This final processing of the bit $a_f$ does not affect Bob's state. Thus, Bob's system ends up in the state $\varrho(a,x,\omega,x_f,a_f)\coloneqq\varrho(a,x,\omega,x_f)$.
\end{enumerate}

We denote by $P^i_{\Omega|A,X}$ the conditional distribution of $\Omega$ given $A$ and $X$, for $X$ chosen independently of $\Omega$ (in contrast to Step 2 above), with elements $P^i_{\Omega|A,X}(\omega,a,x)\coloneqq\Tr[K_{\omega} \rho(a,x) K^{\dagger}_{\omega}]$. With this, the components $\varrho_{A_f|X_f}(a_f,x_f)$ of the final assemblage $\rho_{A_f|X_f}$ are explicitly given by:
\begin{widetext}
\begin{align}
\nonumber
\varrho_{A_f|X_f}(a_f,x_f)&\coloneqq P_{A_f|X_f}(a_f,x_f)\times\varrho_{f}(a_f,x_f)\\
\label{eq:prot1}
&=\sum_{a,x,\omega}P_{A_f,A,X,\Omega|X_f}(a_f,a,x,\omega,x_f)\times \varrho(a_f,a,x,\omega,x_f)\\
\label{eq:prot2}
&=\sum_{a,x,\omega}P_{A_f,A,X,\Omega|X_f}(a_f,a,x,\omega,x_f)\times \frac{K_{\omega} \varrho(a,x) K^{\dagger}_{\omega}}{\Tr[K_{\omega} \varrho(a,x) K^{\dagger}_{\omega}]}\\
\label{eq:prot3}
&=\sum_{a,x,\omega}P_{\Omega}(\omega)P_{X|X_f,\Omega}(x,x_f,\omega)P_{A|X,\Omega}(a,x,\omega)P_{A_f|A,X,\Omega,X_f}(a_f,a,x,\omega,x_f)\times \frac{K_{\omega}\varrho(a,x) K^{\dagger}_{\omega}}{P^i_{\Omega|A,X}(\omega,a,x)}.
\end{align}
 \end{widetext}
Eq. \eqref{eq:prot1} follows from basic properties of probability distributions and ensembles of states. Eq. \eqref{eq:prot2} follows from the definition of $\rho(a,x,\omega,x_f,a_f)$. Eq. \eqref{eq:prot3} follows from Bayes' theorem together with the facts that $P_{A|X,\Omega,X_f}=P_{A|X,\Omega}$ (the output of Alice's initial device only depends on the input $x$ and the measurement outcome $\omega$) and $P_{\Omega|X_f}=P_{\Omega}$ (the measurement outcome $\omega$ is independent of the input of Alice's final device), and from the definition of $P^i_{\Omega|A,X}$. Next, note that, since the statistics of $A$ is fully determined by $X$ and $\Omega$ regardless of whether $X$ and $\Omega$ are independent or not, it holds that 
\begin{align}
\nonumber
P_{A|X,\Omega}&=P^i_{A|X,\Omega}\\
\nonumber
&=\frac{P^i_{A, X,\Omega}}{P^i_{X|\Omega}P^i_{\Omega}}\\
\nonumber
&=\frac{P^i_{A, X, \Omega}}{P^i_{X}P_{\Omega}}\\
\nonumber
&=\frac{P^i_{\Omega|A, X }P^i_{A, X }}{P^i_{X}P_{\Omega}}\\
&=\frac{P^i_{\Omega|A, X }P^i_{A|X }}{P_{\Omega}}
\label{eq:Pi_Bayes},
\end{align}
where we have used Bayes' theorem, that $P^i_{\Omega}=P_{\Omega}$, and that, by definition, $P^i_{X|\Omega}=P^i_{X}$.
Inserting Eq.  \eqref{eq:Pi_Bayes} into Eq. \eqref{eq:prot3}, we obtain
 \begin{widetext}
\begin{align}
\nonumber
\varrho_{A_f|X_f}(a_f,x_f)&=\sum_{a,x,\omega}
P_{X|X_f,\Omega}(x,x_f,\omega)P^{i}_{A|X}(a,x)P_{A_f|A,X,\Omega, X_f}(a_f,a,x,\omega,x_f)\times K_{\omega} \varrho(a,x) K^{\dagger}_{\omega}\\
\label{eq:prot6}
&=\sum_{a,x,\omega}P_{X|X_f,\Omega}(x,x_f,\omega)P_{A_f|A,X,\Omega, X_f}(a_f,a,x,\omega,x_f)\times K_{\omega} \varrho_{A|X}(a,x) K^{\dagger}_{\omega},\ \forall\  (a_f, x_f),
\end{align}
 \end{widetext}
 where \eqref{eq:prot6} follows from the fact that $P^{i}_{A|X}=P_{A|X}$ and the definition of $\varrho_{A|X}$. The right-hand side of Eq. \eqref{eq:prot6} gives the most general expression of the components $\varrho_{A_f|X_f}(a_f,x_f)$ of $\mathcal{M}(\rho_{A|X})$ explicitly as a function of the components $\varrho_{A|X}$ of$\rho_{A|X}$. The reader can straightforwardly verify that the quantum representation $\mathcal{M}(\hat{\rho}_{A|X})$ of the obtained final assemblage $\mathcal{M}(\rho_{A|X})$ is given by the right-hand side of Eq. \eqref{eq:final_assemb}.

\section{Invariance of $\mathsf{LHS}$ under $1W-LOCC$ maps}
\label{sec:proof_Theorem1WLOCCs}
We now show that if $\rho_{A|X}\in\mathsf{LHS}$ then, for all $\mathcal{M}\in\mathsf{1WLOCC}$,  $\mathcal{M}(\rho_{A|X})\in\mathsf{LHS}$. 
\begin{proof}[Proof of Theorem \ref{Theorem1WLOCCs}]
Replacing $\rho_{A|X}$ in Eq. \eqref{eq:prot6} by the right-hand side of Eq. \eqref{eq:lhs}, we write
\begin{widetext}
\begin{align}
\nonumber
\varrho_{A_f|X_f}&=\sum_{a,x,\omega,\lambda}P_{\Lambda}(\lambda)P_{A|X,\Lambda}(a,x,\lambda)P_{X|X_f,\Omega}(x,x_f,\omega)P_{A_f|A,X,\Omega, X_f}(a_f,a,x,\omega,x_f)\times K_{\omega} \xi(\lambda) K^{\dagger}_{\omega}\\
\label{eq:nostee1}&=\sum_{a,x,\omega,\lambda}P_{\Lambda}(\lambda)P_{\Omega|\Lambda}(\omega,\lambda)P_{A|X,\Lambda}(a,x,\lambda)P_{X|X_f,\Omega}(x,x_f,\omega)P_{A_f|A,X,\Omega, X_f}(a_f,a,x,\omega,x_f)\times \xi(\lambda,\omega),
\end{align}
\end{widetext}
where  the conditional probability $P_{\Omega|\Lambda}(\omega,\lambda)\coloneqq\Tr[K_{\omega} \xi(\lambda) K^{\dagger}_{\omega}]$ and the normalized state $\xi(\lambda,\omega)\coloneqq\frac{K_{\omega} \xi(\lambda) K^{\dagger}_{\omega}}{P_{\Omega|\Lambda}(\omega,\lambda)}$ have been introduced. Using that $a_f$ does not explicitly depend on $\lambda$, we see that 
\begin{equation}
\label{eq:neweq}
P_{A_f|A,X,\Omega,X_f}=P_{A_f|A,X,\Omega, \Lambda, X_f}.
\end{equation}
In turn, using the facts that $x$ is independent of $\lambda$ and $a$ depends only on $x$ and $\lambda$, and Bayes' theorem, we see that 
\begin{equation}
\label{eq:nostee6}
P_{A|X,\Lambda}P_{X|X_f,\Omega}=P_{A|X,\Omega,\Lambda ,X_f}P_{X|\Omega,\Lambda,X_f}=P_{A,X|\Omega,\Lambda,X_f}.
\end{equation}
Substituting into Eq. \eqref{eq:nostee1} yields
\begin{widetext}
\begin{align}
\nonumber
\rho_{A_f|X_f}
&=\sum_{a,x,\omega,\lambda}P_{\Lambda}(\lambda)P_{\Omega|\Lambda}(\omega,\lambda)P_{A,X|\Omega,\Lambda,X_f}(a,x,\omega,\lambda,x_f)P_{A_f|A,X,\Omega,\Lambda,X_f}(a_f,a,x,\omega,\lambda,x_f)\times \xi(\lambda,\omega)\\
\label{eq:nostee4}
&=\sum_{\omega,\lambda}P_{\Omega,\Lambda}(\omega,\lambda)P_{A_f|\Omega,\Lambda,X_f}(a_f,\omega,\lambda,x_f)\times \xi(\lambda,\omega)\\
\label{eq:nostee5}
&=\sum_{\tilde{\lambda}}P_{\tilde{\Lambda}}\left(\tilde{\lambda}\right)P_{A_f|X_f,\tilde{\Lambda}}\left(a_f,x_f,\tilde{\lambda}\right)\:\sigma\left(\tilde{\lambda}\right),
\end{align}
\end{widetext}
where Eq. \eqref{eq:nostee4} follows from Bayes' theorem and summing over $x$ and $a$, and  Eq. \eqref{eq:nostee5} follows from defining the hidden variable $\tilde{\lambda}\coloneqq(\omega,\lambda)$ governed by the normalized probability distribution $P_{\tilde{\Lambda}}\coloneqq P_{\Omega,\Lambda}$. Eq. \eqref{eq:nostee5} manifestly shows that $\rho_{A_f|X_f}\in\mathsf{LHS}$. 
\end{proof}

\section{The relative entropy of steering} \label{sec:apprelativeentropy}
\label{sec:relativeentropy}
In this appendix we prove Theorem \ref{theo: rel_ent}. The proof strategy is similar to that of the proof that the relative entropy of entanglement for quantum states is a convex entanglement monotone \cite{Vedral98}. It relies on two Lemmas, which we  state next but whose proofs we leave for App. \ref{Sec:Proofs_of_Lemmas}. 

\begin{lem}
\label{lem:strongmonotonicitydistance}
The assemblage relative entropy $S_\text{A}$, defined by Eq. \eqref{eqmat:relativeentropy}, does not increase, on average, under deterministic 1W-LOCCs. That is, for any map $\mathcal{M}$ of the form given by Eqs. \eqref{eq:final_assemb} and \eqref{eq:def_W} but with $\sum_{\omega}K^{\dagger}_{\omega}K_{\omega}= \id$ and any two assemblages $\rho_{A|X}$ and $\rho'_{A|X}$, $S_\text{A}$ satisfies the inequality
\begin{eqnarray}
\label{eq:strongmonotonicitydist}
&&\sum_{\omega} P_{\Omega}(\omega) S_\text{A}\left( \frac{\mathcal{M}_{\omega}\left(\rho_{A|X}\right)}{\Tr\left[\mathcal{M}_{\omega}(\rho_{A|X})\right]}\Bigg\|\frac{\mathcal{M}_{\omega}\left(\rho'_{A|X}\right)}{\Tr\left[\mathcal{M}_{\omega}(\rho'_{A|X})\right]}\right)\nonumber\\
&& \hspace{2cm}\leq S_\text{A}\left(\rho_{A|X}|\rho'_{A|X}\right),
\end{eqnarray}
where $\mathcal{M}_{\omega}$ is the stochastic map defined in Eq. \eqref{eq:def_M_mu}, $P_{\Omega}=\Tr\left[\mathcal{M}_{\omega}(\rho_{A|X})\right]$, and $P'_{\Omega}=\Tr\left[\mathcal{M}_{\omega}(\rho'_{A|X})\right]$, with $\sum_{\omega} P_{\omega}=1=\sum_{\omega} P'_{\omega}$.
\end{lem}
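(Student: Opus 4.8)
The plan is to mimic the Vedral--Plenio argument that the relative entropy of entanglement is a strong monotone \cite{Vedral98}, exploiting the fact that $S_\text{A}$ is itself \emph{defined} as a maximisation over distinguishing strategies. The central observation is that any strategy used to distinguish the normalised outputs $\mathcal{M}_\omega(\rho_{A|X})/\Tr[\mathcal{M}_\omega(\rho_{A|X})]$ and $\mathcal{M}_\omega(\rho'_{A|X})/\Tr[\mathcal{M}_\omega(\rho'_{A|X})]$, once composed with the SNIO $\mathcal{M}_\omega$, becomes a legitimate strategy for distinguishing the inputs $\rho_{A|X}$ and $\rho'_{A|X}$. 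Since $S_\text{A}(\rho_{A|X}\|\rho'_{A|X})$ is the maximum over all such input strategies, this composition furnishes the required lower bound.

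Concretely, first I would fix, for every $\omega$, the optimal pair $(\{E_\gamma^{(\omega)}\},P_{X|\Gamma}^{(\omega)})$ attaining the maximum in the $\omega$-th term of the left-hand side of Eq.~\eqref{eq:strongmonotonicitydist}. From these I build a combined generalised measurement on Bob's \emph{input} system with compound outcome $\tilde{\gamma}\coloneqq(\omega,\gamma)$ and operators $E_\gamma^{(\omega)}K_\omega$. This is admissible: combining the completeness $\sum_\gamma E_\gamma^{(\omega)\dagger}E_\gamma^{(\omega)}=\id$ of each output strategy with the trace preservation $\sum_\omega K_\omega^\dagger K_\omega=\id$ of the deterministic SNIO gives $\sum_{\omega,\gamma}K_\omega^\dagger E_\gamma^{(\omega)\dagger}E_\gamma^{(\omega)}K_\omega=\id$. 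The accompanying classical strategy selects Alice's input $x$ conditioned on $\tilde{\gamma}$ by composing $P_{X|\Gamma}^{(\omega)}$ with the input-relabelling part of the wiring $\mathcal{W}_\omega$ of Eq.~\eqref{eq:def_W}. By construction this is one admissible strategy for the input pair, so $S_\text{A}(\rho_{A|X}\|\rho'_{A|X})$ is at least the objective value it produces.

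The remaining step is to identify that objective value with the left-hand side of Eq.~\eqref{eq:strongmonotonicitydist}. The key tool is the chain-rule decomposition of relative entropy under a recorded measurement: for operators $\{A_i\}$ with $\sum_i A_i^\dagger A_i=\id$ and any states $\varrho,\varrho'$, monotonicity of $S_\text{Q}$ under the dephasing channel $\varrho\mapsto\sum_i A_i\varrho A_i^\dagger\otimes\proj{i}$ yields
\begin{equation}
\nonumber
S_\text{Q}(\varrho\|\varrho')\geq S_\text{C}(\{p_i\}\|\{q_i\})+\sum_i p_i\, S_\text{Q}(\varrho_i\|\varrho'_i),
\end{equation}
with $p_i=\Tr[A_i\varrho A_i^\dagger]$, $q_i=\Tr[A_i\varrho' A_i^\dagger]$, and $\varrho_i,\varrho'_i$ the normalised post-measurement states. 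Applying this with $A_i=E_\gamma^{(\omega)}K_\omega$ splits both the $S_\text{C}$ and the $S_\text{Q}$ contributions to the input objective into an $\omega$-level part and a $\gamma$-level part. Using the classical chain rule $S_\text{C}(P_{\tilde{\Gamma}}\|P'_{\tilde{\Gamma}})=S_\text{C}(P_\Omega\|P'_\Omega)+\sum_\omega P_\Omega(\omega)\,S_\text{C}(P_{\Gamma|\Omega=\omega}\|P'_{\Gamma|\Omega=\omega})$, the $\omega$-level terms reassemble precisely into the weighted sum $\sum_\omega P_\Omega(\omega)\,S_\text{A}(\cdots)$ of the output relative entropies, the weights $P_\Omega(\omega)$ appearing exactly as the $p_i$ factors above. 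This gives that $S_\text{A}(\rho_{A|X}\|\rho'_{A|X})$ is bounded below by the left-hand side of Eq.~\eqref{eq:strongmonotonicitydist}, which is the claim.

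The hard part will be the bookkeeping in the last step: one must verify that the asymmetric normalisations by $\Tr[\mathcal{M}_\omega(\rho_{A|X})]$ and $\Tr[\mathcal{M}_\omega(\rho'_{A|X})]$ are \emph{exactly} the quantities absorbed into the classical term $S_\text{C}(P_\Omega\|P'_\Omega)$, and that the output-relabelling factor $P(a_f|a,x,\omega,x_f)$ of the wiring $\mathcal{W}_\omega$ — a classical channel acting on the flag register — commutes correctly with the measurement decomposition, so that the input and output objectives match term by term (its presence only strengthens the bound by data processing for $S_\text{Q}$). The no-signalling property \eqref{eq:nosignaling}, which guarantees that $P_\Omega(\omega)$ is well defined independently of $x$, is what makes this matching self-consistent.
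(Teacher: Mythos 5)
Your proposal is correct and takes essentially the same route as the paper's own proof: you compose the $\omega$-wise optimal output strategies with the Kraus operators into a single admissible input strategy with operators $E_{\gamma}^{(\omega)}K_{\omega}$ (the paper's compound operators $T_{\xi}=E_{\gamma,\omega}K_{\omega}$, with the identical completeness check), control the classical terms with the Kullback--Leibler chain rule (the paper's Eq.~\eqref{eq:knullback}), and absorb Alice's wiring by data processing of $S_\text{Q}$ --- joint convexity \eqref{eq:convexity_Sq} for the unrecorded mixture over $x$ and CPTP contraction \eqref{eq:processingineq} under the flag-register channel $a\mapsto a_f$ (the paper's map $\mathcal{R}_{x,\omega,x_f}$ of Eq.~\eqref{eq:auxiliary_map}). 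The only difference is the direction of presentation --- the paper upper-bounds the left-hand side step by step until it is recognized as one admissible strategy value in the maximisation defining $S_\text{A}(\rho_{A|X}\|\rho'_{A|X})$, whereas you build that strategy first and then lower-bound its value --- and your anticipated ``bookkeeping'' does close exactly as you expect, since $P_{\tilde{\Gamma}}(\omega,\gamma)=P_{\Omega}(\omega)P_{\Gamma|\Omega}(\gamma,\omega)$ makes the asymmetric normalisations match term by term.
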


\begin{lem}\label{lem:convex}
The assemblage relative entropy $S_\text{A}$, defined by Eq. \eqref{eqmat:relativeentropy}, is jointly convex. That is, given two sets $\{\rho^{(j)}_{A|X}\}_{j=1,\ldots,n}$ and $\{\rho'^{(j)}_{A|X}\}_{j=1,\ldots,n}$ of $n$ arbitrary assemblages each and $n$ positive real numbers $\{\mu^{(j)}\}_{j=1,\ldots,n}$ such that $\sum_{j}\mu^{(j)}=1$, with $n\in\mathbb{N}$, $S_\text{A}$ satisfies the inequality
\begin{eqnarray}
\label{eq:convexity}
&&S_\text{A}\left(\sum_{i}\mu^{(j)} \rho^{(j)}_{A|X} \Big\|\sum_{j}\mu^{(j)} \rho'^{(j)}_{A|X}\right)\nonumber\\
&&\hspace{2cm} \leq \sum_{j} \mu^{(j)} S_\text{A}\left(\rho^{(j)}_{A|X}\big\|\rho'^{(j)}_{A|X}\right).
\end{eqnarray}
\end{lem}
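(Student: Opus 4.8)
The plan is to mirror the standard argument by which the relative entropy of entanglement is shown to be a convex monotone \cite{Vedral98}, adapting it to assemblages. Two ingredients drive the proof: the joint convexity of the underlying (unnormalised) quantum relative entropy $S_\text{Q}$, and the fact that $S_\text{A}$ in Eq.~\eqref{eqmat:relativeentropy} is defined as a \emph{maximisation} over distinguishing strategies $(\{E_\gamma\},P_{X|\Gamma})$. The latter suggests the skeleton at once: I would fix a strategy that is optimal for the mixed pair on the left-hand side of Eq.~\eqref{eq:convexity}; since that same strategy is feasible --though in general suboptimal-- for each individual pair $(\rho^{(j)}_{A|X},\rho'^{(j)}_{A|X})$ on the right-hand side, the value it attains on pair $j$ lower-bounds $S_\text{A}(\rho^{(j)}_{A|X}\|\rho'^{(j)}_{A|X})$. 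It then suffices to show that, \emph{for any fixed strategy}, the objective is itself jointly convex in the assemblage pair.

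First I would simplify the fixed-strategy objective. Writing the unnormalised operators $\tau_\gamma(x):=\id\otimes E_\gamma\,\hat{\rho}_{A|X}(x)\,\id\otimes E_\gamma^\dagger$ and $\tau'_\gamma(x)$ analogously, the no-signaling condition \eqref{eq:nosignaling} guarantees that $\Tr[\tau_\gamma(x)]=P_\Gamma(\gamma)$ and $\Tr[\tau'_\gamma(x)]=P'_\Gamma(\gamma)$ are independent of $x$, as recorded in Eqs.~\eqref{eq:Prob_gamma}. The homogeneity identity
\begin{equation}
P_\Gamma(\gamma)\,S_\text{Q}\!\left(\frac{\tau_\gamma(x)}{P_\Gamma(\gamma)}\Big\|\frac{\tau'_\gamma(x)}{P'_\Gamma(\gamma)}\right)=\Tr\!\left[\tau_\gamma(x)\big(\log\tau_\gamma(x)-\log\tau'_\gamma(x)\big)\right]+P_\Gamma(\gamma)\log\frac{P'_\Gamma(\gamma)}{P_\Gamma(\gamma)},
\end{equation}
summed against the weights $P(x|\gamma)$ (which sum to one over $x$), turns the second term into exactly $-S_\text{C}(P_\Gamma\|P'_\Gamma)$, cancelling the classical-relative-entropy term of Eq.~\eqref{eqmat:relativeentropy}. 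The whole fixed-strategy objective thus collapses to
\begin{equation}
\sum_{\gamma,x}P(x|\gamma)\,\Tr\!\left[\tau_\gamma(x)\big(\log\tau_\gamma(x)-\log\tau'_\gamma(x)\big)\right],
\end{equation}
a sum, with \emph{state-independent} weights $P(x|\gamma)$, of extended (Umegaki) relative entropies of the positive operators $\tau_\gamma(x),\tau'_\gamma(x)$.

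The conclusion then follows from linearity and joint convexity. The quantum representation \eqref{eq:quant_rep} is linear in the assemblage, so for the mixtures $\bar\rho_{A|X}=\sum_j\mu^{(j)}\rho^{(j)}_{A|X}$ and $\bar\rho'_{A|X}=\sum_j\mu^{(j)}\rho'^{(j)}_{A|X}$ one has $\bar\tau_\gamma(x)=\sum_j\mu^{(j)}\tau^{(j)}_\gamma(x)$, and likewise for the primed operators. Invoking the joint convexity of $(A,B)\mapsto\Tr[A(\log A-\log B)]$ on pairs of positive operators (Lindblad--Lieb) term by term in $(\gamma,x)$, the fixed-strategy objective on the mixtures is bounded above by the $\mu^{(j)}$-weighted sum of the same objective on the individual pairs. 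Taking the strategy to be optimal for $(\bar\rho_{A|X},\bar\rho'_{A|X})$ makes the left-hand side equal to $S_\text{A}(\bar\rho_{A|X}\|\bar\rho'_{A|X})$, while each single-pair term is at most $S_\text{A}(\rho^{(j)}_{A|X}\|\rho'^{(j)}_{A|X})$ by the defining maximisation, which yields Eq.~\eqref{eq:convexity}.

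I expect the main obstacle to be the bookkeeping behind the cancellation of the $S_\text{C}$ term: it hinges crucially on $P_\Gamma(\gamma)$ being $x$-independent, so that the $\log P_\Gamma$ and $\log P'_\Gamma$ contributions survive the $\sum_x P(x|\gamma)=1$ summation intact, which is precisely where the no-signaling condition \eqref{eq:nosignaling} enters. Beyond that, one must be careful that the object made jointly convex is the \emph{unnormalised} relative entropy $\Tr[A\log A-A\log B]$ rather than its normalised counterpart --this is the form for which the Lindblad--Lieb theorem applies directly-- and that the max-over-strategies step is legitimate, i.e.\ that an optimal strategy for the mixture is indeed a valid feasible strategy for each component.
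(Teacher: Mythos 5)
Your proof is correct and follows essentially the same route as the paper's: both arguments reduce the claim to joint convexity of the unnormalised quantum relative entropy applied to the operators $\id\otimes E_\gamma\,\hat{\rho}_{A|X}(x)\,\id\otimes E_\gamma^\dagger$, which are linear in the assemblage, combined with the observation that a strategy optimal for the mixture is feasible (but generally suboptimal) for each pair $(\rho^{(j)}_{A|X},\rho'^{(j)}_{A|X})$, so the maximisation can be split over $j$. The only difference is bookkeeping: where you cancel the $S_\text{C}(P_\Gamma\|P'_\Gamma)$ term against the normalisation factors via the homogeneity identity, the paper absorbs it into a single block-diagonal quantum relative entropy using orthogonal flag states $\ket{\gamma}$ in Eq.~\eqref{eq:quant_rep_gamma}; expanding that block-diagonal relative entropy reproduces exactly your reduced objective, so the two manipulations are algebraically identical.
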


We are now in a good position to prove the theorem.

\begin{proof}[Proof of Theorem \ref{theo: rel_ent}]
That the relative entropy of steering $\mathscr{S}_{\text{R}}$, defined in Eq. \eqref{eq:def_rel_ent_steering}, satisfies condition $i$) follows immediately from its definition  and the positivity of the von-Neumann relative entropy for quantum states. Conditions $ii$) and $iii$), 1W-LOCC monotonicity and convexity of $\mathscr{S}_{\text{R}}$, can be proven in analogous fashion to LOCC monotonicity and convexity of the relative entropy of entanglement, respectively. We include their proofs for completeness.

To prove condition $ii$), we denote  by $\sigma^*$ an unsteerable assemblage for which the minimisation in Eq. \eqref{eq:def_rel_ent_steering} is attained, i.e., such that 
\begin{equation}
S_\text{A}(\rho_{A|X}\|\sigma^*)\coloneqq\mathscr{S}_{\text{R}}(\rho_{A|X})
\end{equation}
 and by $\hat{\sigma}^*_{\mu}$ an unsteerable assemblage such that 
 \begin{equation}
S_\text{A}\left(\frac{\mathcal{M}_{\omega}(\rho_{A|X})}{\Tr\left[\mathcal{M}_{\omega}(\rho_{A|X})\right]}\bigg\|\hat{\sigma}^*_{\mu}\right)\coloneqq\mathscr{S}_{\text{R}}\left( \frac{\mathcal{M}_{\omega}(\rho_{A|X})}{\Tr\left[\mathcal{M}_{\omega}(\rho_{A|X})\right]}\right).
\end{equation}
Then, we write 
\begin{eqnarray}
\nonumber 
&&\sum_{\omega} P_{\Omega}(\omega) \mathscr{S}_{\text{R}}\left( \frac{\mathcal{M}_{\omega}(\rho_{A|X})}{\Tr\left[\mathcal{M}_{\omega}(\rho_{A|X})\right]}\right)\\
&&\hspace{0.25cm}=\sum_{\omega} P_{\Omega}(\omega) S_\text{A}\left(\frac{\mathcal{M}_{\omega}(\rho_{A|X})}{\Tr\left[\mathcal{M}_{\omega}(\rho_{A|X})\right]}\bigg\|\sigma^*_{\mu}\right)\nonumber\\
\label{eq:strongmon1}
&&\hspace{0.25cm}\leq \sum_{\omega}P_{\Omega}(\omega)S_\text{A}\left(\frac{\mathcal{M}_{\omega}(\rho_{A|X})}{\Tr\left[\mathcal{M}_{\omega}(\rho_{A|X})\right]}\bigg\| \frac{\mathcal{M}_{\omega}(\sigma^*)}{\Tr[\mathcal{M}_{\omega}(\sigma^*)]} \right)\\
\label{eq:strongmon2}
&&\hspace{0.25cm}\leq S_\text{A}(\rho_{A|X}\|\sigma^*)\\
\label{eq:last}
&&\hspace{0.3cm}=\mathscr{S}_{\text{R}}(\rho_{A|X}),
\end{eqnarray}
where Eq. \eqref{eq:strongmon1} follows because $\sigma^*_{\mu}$ minimises the assemblage relative entropy in each $\omega$-th term in the sum and $\frac{\mathcal{M}_{\omega}(\sigma^*)}{\Tr[\mathcal{M}_{\omega}(\sigma^*)]} \in \mathsf{LHS}$, Eq. \eqref{eq:strongmon2} due to Lemma \ref{lem:strongmonotonicitydistance}, and Eq. \eqref{eq:last} due to the definition of $\sigma^*$. 

To prove condition condition $iii$), we further introduce unsteerable assemblages $\sigma'^*$ and $\sigma_{\text{mix}}^*$ such that 
\begin{equation}
S_\text{A}(\rho'_{A|X}\|\sigma'^*)=\mathscr{S}_{\text{R}}(\rho'_{A|X})
\end{equation}
 and 
\begin{eqnarray}
&&S_\text{A}\left(\mu\, \rho_{A|X}+(1-\mu)\rho'_{A|X}\|\sigma_{\text{mix}}^*\right)\nonumber\\
&&\hspace{1.5cm}=\mathscr{S}_{\text{R}}\left(\mu\, \rho_{A|X}+(1-\mu)\rho'_{A|X}\right).
\end{eqnarray}
Then, we write
\begin{eqnarray}
\nonumber 
&&\mu\, \mathscr{S}_{\text{R}}\left(\rho_{A|X}\right)+(1-\mu)\mathscr{S}_{\text{R}}\left(\rho'_{A|X}\right)\\
\nonumber &=&\mu\, S_\text{A}(\rho_{A|X}\|\sigma^*)+(1-\mu)S_\text{A}(\rho'_{A|X}\|\sigma'^*)\\
\label{eq:use_Lemma2}
&\geq& S_\text{A}\left(\mu\,\rho_{A|X}+(1-\mu)\rho'_{A|X}\|\mu\,\sigma^*+(1-\mu)\sigma'^*\right)\\
\label{eq:use_minimization}
&\geq& S_\text{A}\left(\mu\,\rho_{A|X}+(1-\mu)\rho'_{A|X}\|\sigma^*_{\text{mix}}\right)\\
\label{eq:last2}
&\eqqcolon &\mathscr{S}_{\text{R}}\left(\mu\,\rho_{A|X}+(1-\mu)\rho'_{A|X}\right),
\end{eqnarray}
where Eq. \eqref{eq:use_Lemma2} holds due to Lemma \ref{lem:convex}, Eq. \eqref{eq:use_minimization} because $\sigma^*_{\text{min}}$ minimizes the corresponding assemblage relative entropy and $\mu \sigma* + (1-\mu)\sigma'^* \in \mathsf{LHS}$, and Eq. \eqref{eq:last2} by the definition of $\sigma^*_{\text{min}}$. 
\end{proof}

\section{Proof of Theorem \ref{theo:W_and_Rob}}
\label{app:other_measures}
\begin{proof}[Proof of Theorem \ref{theo:W_and_Rob}]
Let us first prove the theorem's statement concerning the steerable weight. That $\mathscr{S}_{\rm{W}}$ satisfies condition $i$) of Definition \ref{def:Mon_Conv} follows immediately from its definition. 
To prove that it fulfils  condition $ii$), first, we apply the map $\mathcal{M}_{\omega}$ to both sides of Eq. \eqref{eq:decompsw} and renormalize. This yields
\begin{align}
\label{eq:decompsw2}
 \nonumber
 \frac{\mathcal{M}_{\omega}(\rho_{A|X})}{\Tr\left[\mathcal{M}_{\omega}(\rho_{A|X})\right]}&= \nu\, \frac{\mathcal{M}_{\omega}\left(\tilde{\rho}_{A|X}\right)}{\Tr\left[\mathcal{M}_{\omega}(\rho_{A|X})\right]}\\
 \nonumber
 &+ (1-\nu) \frac{\mathcal{M}_{\omega}\left(\sigma_{A|X}\right)}{\Tr\left[\mathcal{M}_{\omega}(\rho_{A|X})\right]}\\
  \nonumber
  &= \nu\, \frac{\mathcal{M}_{\omega}\left(\tilde{\rho}_{A|X}\right)}{\Tr\left[\mathcal{M}_{\omega}(\tilde{\rho}_{A|X})\right]}\frac{\Tr\left[\mathcal{M}_{\omega}(\tilde{\rho}_{A|X})\right]}{\Tr\left[\mathcal{M}_{\omega}(\rho_{A|X})\right]}\\
&+ (1-\nu) \frac{\mathcal{M}_{\omega}\left(\sigma_{A|X}\right)}{\Tr\left[\mathcal{M}_{\omega}(\sigma_{A|X})\right]}\frac{\Tr\left[\mathcal{M}_{\omega}(\sigma_{A|X})\right]}{\Tr\left[\mathcal{M}_{\omega}(\rho_{A|X})\right]}
.
\end{align}
Denoting by $\nu^{*}_\omega$ the minimum $\nu\in\mathbb{R}_{\geq0}$ such that a decomposition of the form of Eq. \eqref{eq:decompsw2} is possible, it is clear that 
\begin{equation}
\label{eq:wth_term_asterix}
\nu^{*}_\omega\leq\mathscr{S}_{\rm{W}}(\rho_{A|X}),
\end{equation}
as any $\nu\in\mathbb{R}_{\geq0}$ that allows for a decomposition as in Eq. \eqref{eq:decompsw} allows also for one as in Eq. \eqref{eq:decompsw2}.
Furthermore, taking into account that, since $\sigma_{A|X}\in \mathsf{LHS}$, it holds that $\frac{\mathcal{M}_{\omega}\left(\sigma_{A|X}\right)}{\Tr\left[\mathcal{M}_{\omega}(\sigma_{A|X})\right]}\in \mathsf{LHS}$, it is also clear that 
\begin{equation}
\label{eq:wth_term}
\mathscr{S}_{\rm{W}}\left(\frac{\mathcal{M}_{\omega}(\rho_{A|X})}{\Tr\left[\mathcal{M}_{\omega}(\rho_{A|X})\right]}\right)\leq\nu^{*}_\omega \frac{\Tr\left[\mathcal{M}_{\omega}(\tilde{\rho}_{A|X})\right]}{\Tr\left[\mathcal{M}_{\omega}(\rho_{A|X})\right]}.
\end{equation}
Hence, we obtain
\begin{eqnarray}
\nonumber
&&\sum_{\omega}P_{\Omega}(\omega)\mathscr{S}_{\rm{W}}\left(\frac{\mathcal{M}_{\omega}(\rho_{A|X})}{\Tr\left[\mathcal{M}_{\omega}(\rho_{A|X})\right]}\right)\\
\nonumber 
 &\leq&\sum_{\omega}P_{\Omega}(\omega)\nu^{*}_\omega \frac{\Tr\left[\mathcal{M}_{\omega}(\tilde{\rho}_{A|X})\right]}{\Tr\left[\mathcal{M}_{\omega}(\rho_{A|X})\right]}\\
 \nonumber 
  &\leq&\sum_{\omega}  \Tr\left[\mathcal{M}_{\omega}(\tilde{\rho}_{A|X})\right]\mathscr{S}_{\rm{W}}(\rho_{A|X})\\
 &=&\mathscr{S}_{\rm{W}}(\rho_{A|X}),
\end{eqnarray}
where the second inequality is due to the facts that $P_{\omega}=\Tr\left[\mathcal{M}_{\omega}(\rho_{A|X})\right]$ and \eqref{eq:wth_term_asterix} and the last equality to the fact that, since $\tilde{\rho}_{A|X}$ is a normalised assemblage and $\mathcal{M}$ is deterministic map, $\sum_{\omega} \Tr\left[\mathcal{M}_{\omega}(\tilde{\rho}_{A|X})\right]=1$.

To prove the validity of condition $iii$) for $\mathscr{S}_{\rm{W}}$, we first write 
\begin{align}
\nonumber
&\mu\, \rho_{A|X}+(1-\mu)\rho'_{A|X}\\&=\nonumber \mu\, \left[\mathscr{S}_{\rm{W}}\left(\rho_{A|X}\right)\tilde{\rho}_{A|X}+\left(1-\mathscr{S}_{\rm{W}}\left(\rho_{A|X}\right)\right)\sigma_{A|X}\right]\\
\label{eq:first_eq}
&+(1-\mu)\left[\mathscr{S}_{\rm{W}}\left(\rho'_{A|X}\right)\tilde{\rho}'_{A|X}+\left(1-\mathscr{S}_{\rm{W}}\left(\rho_{A|X}\right)\right)\sigma'_{A|X}\right]\\
\label{eq:second_eq}
&=\nu^{(\mu)}\, \tilde{\rho}^{(\mu)}_{A|X} + \left(1-\nu^{(\mu)}\right) \sigma^{(\mu)}_{A|X},
\end{align}
where Eq. \eqref{eq:first_eq} holds due to the definition of $\mathscr{S}_{\rm{W}}$ and, in Eq. \eqref{eq:second_eq}, we have introduced the positive real 
\begin{equation}
\label{eq:def_nu_mu}
\nu^{(\mu)}\coloneqq\mu\, \mathscr{S}_{\rm{W}}\left(\rho_{A|X}\right)+(1-\mu)\mathscr{S}_{\rm{W}}\left(\rho'_{A|X}\right),
\end{equation}
 the normalized assemblage
\begin{eqnarray}
\nonumber\tilde{\rho}^{(\mu)}_{A|X}\coloneqq  \frac{1}{\nu^{(\mu)}}&\bigg[&\mu\, \mathscr{S}_{\rm{W}}\left(\rho_{A|X}\right)\tilde{\rho}_{A|X}\\&+&(1-\mu)\mathscr{S}_{\rm{W}}\left(\rho'_{A|X}\right)\tilde{\rho}'_{A|X}\bigg]
\end{eqnarray}
and the normalized unsteerable assemblage
\begin{eqnarray}
\sigma^{(\mu)}_{A|X}\coloneqq\,\frac{1}{1-\nu^{(\mu)}}&\bigg[&\mu \left(1-\mathscr{S}_{\rm{W}}\left(\rho_{A|X}\right)\right)\sigma_{A|X}\\\nonumber &+&(1-\mu)\left(1-\mathscr{S}_{\rm{W}}\left(\rho_{A|X}\right)\right)\sigma'_{A|X}\bigg].
\end{eqnarray}
Thus, the expression \eqref{eq:second_eq} gives a decomposition of the mixture $\mu\, \rho_{A|X}+(1-\mu)\rho'_{A|X}$ of the form of Eq. \eqref{eq:decompsw}. However, it is not necessarily the optimal one. Hence, we get 
\begin{equation}
\label{eq:conv_W_last}
\mathscr{S}_{\rm{W}}(\mu \rho_{A|X}+(1-\mu)\rho'_{A|X})\leq \nu^{(\mu)},
\end{equation}
 which, together with Eq.  \eqref{eq:def_nu_mu}, finishes the proof of convexity of $\mathscr{S}_{\rm{W}}$. 

Similar arguments can be employed  to prove the theorem's statement concerning the robustness of steering. That $\mathscr{S}_{\rm{Rob}}$ satisfies condition $i$) of Definition \ref{def:Mon_Conv} also follows immediately by definition. Condition $ii$) can be proven with a similar strategy to that for $\mathscr{S}_{\rm{W}}$. Condition $iii$) can be proven by noting that Definition \ref{def:Rob} implies that 
\begin{subequations}
\begin{align}
\rho_{A|X}&=[1+\mathscr{S}_{\rm{Rob}}(\rho_{A|X})]\sigma_{A|X}-\mathscr{S}_{\rm{Rob}}(\rho_{A|X})\tilde{\rho}_{A|X}\\
\rho'_{A|X}&=[1+\mathscr{S}_{\rm{Rob}}(\rho'_{A|X})]\sigma'_{A|X}-\mathscr{S}_{\rm{Rob}}(\rho'_{A|X})\tilde{\rho}'_{A|X},
\end{align}
\end{subequations}
where the unsteerable assemblage $\sigma'_{A|X}$ and the arbitrary assemblage $\tilde{\rho}'_{A|X}$ play respectively the same roles for $\rho'_{A|X}$ to the ones played by $\sigma_{A|X}$ and  $\tilde{\rho}_{A|X}$ for $\rho_{A|X}$ in Definition \ref{def:Rob}.
Then, one can introduce
the positive real 
\begin{equation}
\nonumber\nu^{(\mu)}=\mu\,\mathscr{S}_{\rm{Rob}}(\rho_{A|X}) +(1-\mu)\mathscr{S}_{\rm{Rob}}(\rho'_{A|X}),
\end{equation}
 the normalized assemblage
\begin{equation}
\nonumber\tilde{\rho}^{(\mu)}_{A|X}=\frac{\mu\, \mathscr{S}_{\rm{Rob}}(\rho_{A|X})}{\nu^{(\mu)}}\tilde{\rho}_{A|X}+(1-\mu)\mathscr{S}_{\rm{Rob}}(\rho'_{A|X}) \tilde{\rho}'_{A|X},
\end{equation}
and the normalized unsteerable assemblage
\begin{eqnarray}
\nonumber \sigma^{(\mu)}_{A|X}&=&\frac{\mu \left(1+\mathscr{S}_{\rm{Rob}}(\rho_{A|X})\right)}{1+\nu^{(\mu)}}\sigma_{A|X}\\\nonumber&+&(1-\mu)\left(1+\mathscr{S}_{\rm{Rob}}(\rho'_{A|X})\right) \sigma'_{A|X},
\end{eqnarray}
such that
\begin{eqnarray}
\nonumber \mu\, \rho_{A|X}+(1-\mu)\rho'_{A|X}=\left(1+\nu^{(\mu)}\right)\sigma^{(\mu)}_{A|X}-\nu^{(\mu)} \tilde{\rho}^{(\mu)}_{A|X},
\end{eqnarray}
and proceed with $\mathscr{S}_{\rm{Rob}}$ analogously as with $\mathscr{S}_{\rm{W}}$ in Eq. \eqref{eq:conv_W_last} above.
\end{proof}


\section{Proof of Theorem \ref{thm:aseemb_tranform}}
\label{sec:assemb_transformations}
\begin{proof}[Proof of Theorem \ref{thm:aseemb_tranform}]
One of the implications is trivial to prove. If $P'_{A|X}=P_{A|X}$ and Eq. \eqref{eq:overlap} holds,  there exists a unitary operator $U$ such that $\ket{\psi'(a,x)}=U\ket{\psi(a,x)}$ for all $a$ and $x$. Then, $\Psi_{A|X}$ can be  transformed into $\Psi'_{A|X}$ by means of a deterministic 1W-LOCC: namely, the one consisting of Bob applying $U$ to his subsystem and Alice doing nothing.
Likewise, if $\Psi'_{A|X}\in\mathsf{LHS}$, then $\Psi_{A|X}$ can trivially be transformed into $\Psi'_{A|X}$ by 1W-LOCCs, as any unsteerable assemblage
can  be created by stochastic 1W-LOCCs by definition (see discussion after Eq. \eqref{eq:lhs}).

Let us then prove the converse implication. That is, assuming that $\Psi_{A|X}$ and $\Psi'_{A|X}$ are pure orthogonal assemblages and that the latter can be obtained from the former by a stochastic 1W-LOCC, we prove that either  $\Psi'_{A|X}\in\mathsf{LHS}$ or $P'_{A|X}=P_{A|X}$ and Eq. \eqref{eq:overlap} is true. To this end, we first note that the no-signaling condition \eqref{eq:nosignaling} restricts minimal-dimension pure orthogonal assemblages to a rather specific form. Namely, the fact that $\Psi_{A|X}$ is no-signaling implies that
\begin{compactenum}[$i$)]
\item either $P_{A|X}(\cdot,x)$ is a deterministic distribution for all $x$,
\item or $P_{A|X}(\cdot,x)$ is the uniform distribution for all $x$.
\end{compactenum}
If case $i$) holds, $\Psi_{A|X} \in \mathsf{LHS}$. Then, since, by assumption, $\Psi'_{A|X}$ can be obtained via a stochastic 1W-LOCC from $\Psi_{A|X}$, one automatically obtains that $\Psi'_{A|X} \in \mathsf{LHS}$. 

To analyze case  $ii$), we use that $\Psi'_{A|X}$ is also subject to the no-signaling condition \eqref{eq:nosignaling}:
\begin{compactenum}[$i'$)]
\item either $P'_{A|X}(\cdot,x)$ is a deterministic distribution for all $x$, 
\item or $P'_{A|X}(\cdot,x)$ is the uniform distribution for all $x$.
\end{compactenum}
That case $i'$) is possible if case  $ii$) holds is clear, as $i'$) corresponds to $\Psi'_{A|X} \in \mathsf{LHS}$. So, it only remains to show that if cases $ii$) and $ii'$) hold, then either $\Psi'_{A|X}\in\mathsf{LHS}$ or Eq. \eqref{eq:overlap} holds. We show it in what follows.

Assuming that $ii$) and $ii'$) hold and that there is a stochastic 1W-LOCC $\mathcal{M}$ that maps $\Psi_{A|X}$ into  $\Psi'_{A|X}$, i.e., such that $\mathcal{M}(\Psi_{A|X})\propto\Psi'_{A|X}$, where ``$\propto$" stands for ``is proportional to", we use Eq.~\eqref{eq:prot6} to obtain
\begin{widetext}
\begin{equation}
\label{eq:protpure0}
\sum_{a,x,\omega}P_{X|X_f,\Omega}(x,x_f,\omega)P_{A_f|A,X,\Omega, X_f}(a_f,a,x,\omega,x_f) K_{\omega} \proj{\psi(a,x)} K^{\dagger}_{\omega}\propto\proj{\psi'(a_f,x_f)}\ \forall\  (a_f, x_f).
\end{equation}
Since the right-hand side of Eq.~\eqref{eq:protpure0} is composed of a  rank-one projector onto a pure state, each term of the sum in the left-hand side must be either zero or proportional to $\proj{\psi'(a_f,x_f)}$. In particular, this must also hold for each $\omega$-th term. That is, for all $\omega$, it must hold that 
\begin{equation}
\label{eq:protpure_no_omega}
\sum_{a,x}P_{X|X_f,\Omega}(x,x_f,\omega)P_{A_f|A,X,\Omega, X_f}(a_f,a,x,\omega,x_f) K_{\omega} \proj{\psi(a,x)} K^{\dagger}_{\omega}\sim\proj{\psi'(a_f,x_f)}\ \forall\  (a_f, x_f),
\end{equation}
 \end{widetext}
where the symbol ``$\sim$" is used to signify ``is either equal to zero or proportional to".
 Indeed, using that $K_{\omega}\neq0$ and that $P_{X|X_f,\Omega}$ and $P_{A_f|A,X,\Omega, X_f}$ are normalised distributions, one can see by case analysis that there are always at least two different pairs $(a_f,x_f)$ for which the  left-hand side of Eq. \eqref{eq:protpure_no_omega} is not zero and, therefore, proportional to $\proj{\psi'(a_f,x_f)}$.

Let us then consider first the case 
\begin{equation}
\label{eq:K_omega_fijo}
K_{\omega} \ket{\psi(a,x)}\neq0 \ \forall\  (a, x).
\end{equation}
The other case will be considered at the end. The first step is to note that Eqs. \eqref{eq:protpure_no_omega} and \eqref{eq:K_omega_fijo} imply that, unless $\Psi'_{A|X} \in \mathsf{LHS}$, 
 \begin{subequations}
\label{eq:P_delta}
\begin{align}
\label{eq:P_delta_x}
P_{X|X_f,\omega}(x|x_f,\omega)&=\delta_{x_f\,  x \oplus f(\omega)},\\
\label{eq:P_delta_a}
P_{A_f|A,X,\Omega, X_f}(a_f,a,x,\omega,x \oplus f(\omega))&\in\{0,1\} \ \forall\  (a_f,a,x),
\end{align}
\end{subequations}
where $f(\omega)\in\{0, 1\}$. That is, for any  $\omega$ for which Eq. \eqref{eq:K_omega_fijo} holds, unless $\Psi'_{A|X} \in \mathsf{LHS}$, the variables $X$ and $X_f$ must be either fully correlated or fully anticorrelated and $P_{A_f|A,X,\Omega, X_f}(\cdot,a_f,x,\omega,x \oplus f(\omega))$ must be a deterministic distribution for all $(a_f,x)$. 

To prove Eq. \eqref{eq:P_delta_x}, suppose that it does not hold. Then there must exist $\tilde{x}$ such that $P_{X|X_f,\omega}(\tilde{x}|x_f,\omega)\neq 0$ for all $x_f$. This, due to Eq. \eqref{eq:protpure_no_omega}, implies that 
\begin{subequations}
\label{eq:protpure_no_x}
\begin{align}
&\nonumber\sum_{a}P_{A_f|A,X,\Omega, X_f}(a_f,a,\tilde{x},\omega,0) K_{\omega} \proj{\psi(a,\tilde{x})} K^{\dagger}_{\omega} \\
&\hspace{1cm}\sim\proj{\psi'(a_f,0)},\\
\nonumber &\sum_{a}P_{A_f|A,X,\Omega, X_f}(a_f,a,\tilde{x},\omega,1) K_{\omega} \proj{\psi(a,\tilde{x})} K^{\dagger}_{\omega}\\
&\hspace{1cm}\sim\proj{\psi'(a_f,1)}.
\end{align}
\end{subequations}
In turn, choosing $\tilde{a}_f$ and $\overline{a}_f$ such that $P_{A_f|A,X,\Omega, X_f}(\tilde{a}_f,a,\tilde{x},\omega,0)>0$ and $P_{A_f|A,X,\Omega, X_f}(\overline{a}_f,a,\tilde{x},\omega,1)>0$, which is always possible due to $P_{A_f|A,X,\Omega, X_f}$ being a normalised distribution and does not require any extra assumption, Eqs. \eqref{eq:K_omega_fijo} and \eqref{eq:protpure_no_x} imply that
\begin{subequations}
\label{eq:protpure_no_a}
\begin{align}
K_{\omega} \ket{\psi(a,\tilde{x})} 
&\propto\ket{\psi'(\tilde{a}_f,0)},\\
K_{\omega} \ket{\psi(a,\tilde{x})} 
&\propto\ket{\psi'(\overline{a}_f,1)}.
\end{align}
\end{subequations}
This finally leads to $\ket{\psi'(\tilde{a}_f,0)}=\ket{\psi'(\overline{a}_f,1)}$, which is true only if $\Psi'_{A|X} \in \mathsf{LHS}$. 

To prove Eq. \eqref{eq:P_delta_a} we use a similar argument. If one assumes that Eq. \eqref{eq:P_delta_a} is false, then there must exist a pair $(\tilde{a}, \tilde{x})$ such that $P_{A_f|A,X,\Omega, X_f}(a_f,\tilde{a},\tilde{x},\omega,\tilde{x} \oplus f(\omega))> 0$ for all $a_f$. Using this and Eqs. \eqref{eq:protpure_no_omega}, \eqref{eq:K_omega_fijo}, and \eqref{eq:P_delta_x}, one arrives at 
\begin{subequations}
\label{eq:protpure_final}
\begin{align}
K_{\omega} \ket{\psi(\tilde{a},\tilde{x})}  
&\propto\ket{\psi'(0,\tilde{x}\oplus f(\omega))},\\
K_{\omega} \ket{\psi(\tilde{a},\tilde{x})} 
&\propto\ket{\psi'(1,\tilde{x}\oplus f(\omega))},
\end{align}
\end{subequations}
which, since $\ket{\psi'(0,\tilde{x}\oplus f(\omega))}$ and $\ket{\psi'(1,\tilde{x}\oplus f(\omega))}$ are orthogonal, yields a contradiction.

The second step is to note that Eqs. \eqref{eq:K_omega_fijo} and \eqref{eq:P_delta} impose restrictions on which $a$'s and $x$'s can contribute to each $a_f$ and $x_f$ in Eq.~\eqref{eq:protpure_no_omega}. More precisely, one can see by case analysis that, up to relabelings of $a_f$ or $x_f$, only three different types of assignments are possible: 
\begin{widetext}
\begin{center}
\begin{tabular}{ l | l | l }
  a) & b) & c)\\ \hline
  $K_{\omega}\ket{\psi(0,0)} \propto \ket{\psi'(0,0)}$ & $K_{\omega}\ket{\psi(0,0)} \propto  \ket{\psi'(0,0)}$ & $K_{\omega}\ket{\psi(0,0)} \propto  \ket{\psi'(0,0)}$ \\
  $K_{\omega}\ket{\psi(1,0) }\propto  \ket{\psi'(1,0)}$ & $K_{\omega}\ket{\psi(1,0) } \propto  \ket{\psi'(1,0)}$ & $K_{\omega}\ket{\psi(1,0) } \propto  \ket{\psi'(0,0)}$  \\
   $K_{\omega
   }\ket{\psi(0,1)} \propto \ket{\psi'(0,1)}$ & $K_{\omega}\ket{\psi(0,1)}  \propto\ket{\psi'(0,1)}$ & $K_{\omega}\ket{\psi(0,1)}  \propto \ket{\psi'(0,1)}$ \\
    $K_{\omega
    }\ket{\psi(1,1) }\propto \ket{\psi'(1,1)}$ & $K_{\omega}\ket{\psi(1,1) } \propto\ket{\psi'(0,1)}$ & $K_{\omega}\ket{\psi(1,1) }\propto \ket{\psi'(0,1)}$\\
\end{tabular}
\end{center}
\end{widetext}

The third step is to show that all three cases a-c) are possible only if either $\Psi'_{A|X} \in \mathsf{LHS}$ or Eq. \eqref{eq:overlap} holds. Note that it is enough to show this for the case where all the eight vectors $\{\ket{\psi(a,x)},\ket{\psi'(a_f,x_f)}\}_{a,x,a_f,x_f}$ lie on a same plane of the Bloch sphere. This is due to the fact that, since $\Psi_{A|X}$ and $\Psi'_{A|X}$ are both pure no-signaling assemblages of minimal dimension, $\{\ket{\psi(a,x)}\}_{a_f,x_f}$ and $\{\ket{\psi'(a_f,x_f)}\}_{a_f,x_f}$ are each one already contained in two planes of the Bloch sphere, as one can straightforwardly see using Eq. \eqref{eq:nosignaling}. These two planes can always be rotated so as to coincide by a unitary operation, which can in turn be absorbed in the definition of the Kraus operator $K_{\omega}$. Hence, without loss of generality, we take 
\begin{subequations}
\label{eq:states_same_plane}
\begin{align*}
\ket{\psi(0,0)}&=\ket{0},\\ \ket{\psi(1,0)}&=\ket{1},\\
\ket{\psi(0,1)}&=\cos(\varphi)\ket{0}+
\sin(\varphi)\ket{1},\\ \ket{\psi(1,1)}&=-\sin(\varphi)\ket{0}+\cos(\varphi)\ket{1},\\
\ket{\psi'(0,0)}&=\cos(\theta)\ket{0}+
\sin(\theta)\ket{1},\\ \ket{\psi'(1,0)}&=-\sin(\theta)\ket{0}+\cos(\theta)\ket{1},\\
\ket{\psi'(1,1)}&=\cos(\phi)\ket{0}+
\sin(\phi)\ket{1},\\ \ket{\psi'(1,1)}&=-\sin(\phi)\ket{0}+\cos(\phi)\ket{1},
\end{align*}
\end{subequations}
for  arbitrary  $\varphi$, $\theta$ and $\phi\in[0,\pi/2[$, and where $\ket{0}$ and $\ket{1}$ represent the computational-basis states. We analyse first the case a). Dividing both vector components (in the computational basis) of the first equation of this case, one obtains that $\frac{\left[K_{\omega}\right]_{00}}{\left[K_{\omega}\right]_{10}}=\frac{\cos{(\theta)}}{\sin{(\theta)}}$, where $\left[K_{\omega}\right]_{ij}\coloneqq\bra{i}K_{\omega}\ket{j}$. Analogously, dividing both vector components of the second equation yields $\frac{\left[K_{\omega}\right]_{01}}{\left[K_{\omega}\right]_{11}}=\frac{-\sin{(\theta)}}{\cos{(\theta)}}$. Hence, introducing proportionality constants $\kappa_1>0$ and $\kappa_2>0$, the Kraus operator can be matrix-represented in the computational basis as
\begin{equation}
\label{eq:K_omega_kappa}
K_{\omega}=\begin{pmatrix}
\kappa_1\cos(\theta) & -\kappa_2\sin(\theta) \\
\kappa_1\sin(\theta) & \kappa_2\cos(\theta)
\end{pmatrix}.
\end{equation} 
Using Eq. \eqref{eq:K_omega_kappa}, the third equation of case a) implies  that 
\begin{equation}\label{eq:trig1}
\kappa_1 \cos(\varphi) \sin(\theta-\phi)=\kappa_2\sin(\varphi)\cos(\theta- \phi).
\end{equation}
Finally, the fourth equation leads to
\begin{equation}\label{eq:trig2}
\kappa_2 \cos(\varphi) \sin(\theta-\phi)=\kappa_1\sin(\varphi)\cos(\theta- \phi).
\end{equation}
Eqs. \eqref{eq:trig1} and \eqref{eq:trig2} can be simultaneously satisfied only if $\theta-\phi=\varphi$ or $(\theta-\phi)\times\varphi=0$. The former option yields Eq. \eqref{eq:overlap}. The latter one implies that $\Psi'_{A|X} \in \mathsf{LHS}$.
In a similar fashion, for case b), the first three equations lead to Eq. \eqref{eq:trig1} and the fourth one  to
\begin{equation}
-\kappa_1\sin(\varphi)\sin(\theta-\phi)=\kappa_2\cos(\phi)\cos(\theta- \phi).
\end{equation}
This cannot be satisfied unless  $(\theta-\phi)\times\varphi=0$, which means that $\Psi'_{A|X} \in \mathsf{LHS}$. 
With a similar argument the reader can straightforwardly verify that the same thing happens for case c). This finishes the proof of the theorem for the $\omega$'s for which Eq. \eqref{eq:K_omega_fijo} holds.

As the fourth and final step, it remains to treat the case where, for a certain $\omega$, there exists a pair $(\tilde{a},\tilde{x})$ for which $K_{\omega} \ket{\psi(\tilde{a},\tilde{x})}=0$. Since $K_{\omega}\neq0$, the latter is true only if the support of $K_{\omega}$ is given by the span of $\ket{\psi(\tilde{a}\oplus1,\tilde{x})}$. 
Using this and the fact that there are always at least two different pairs $(a_f,x_f)$ for which the left-hand side of Eq. \eqref{eq:protpure_no_omega} is not zero, one obtains that  
$K_{\omega}\propto\ket{\psi'(a_f,x_f)}\bra{\psi(\tilde{a}\oplus1,\tilde{x})}$ for two different pairs $(a_f,x_f)$, which, unless $\Psi'_{A|X}\in \mathsf{LHS}$, is a contradiction.
\end{proof}

\section{Non-existence of minimal-dimension steering bits}
\label{sec:no_steering_bits}
In this appendix we prove Theorem \ref{thm:nosteeringbig}. This section bears many similarities with App. \ref{sec:assemb_transformations}.
 
\begin{proof}[Proof of Theorem \ref{thm:nosteeringbig}] 
We proceed by {\it reductio ad absurdum}. That is, we show that if one supposes that there exists a pure normalised assemblage $\Psi_{A|X}\coloneqq\{P_{A|X}(a,x),\ket{\psi(a,x)}\}_{a,x}$, with $d=s=r=2$, from which all assemblages can be obtained via stochastic 1W-LOCCs, one obtains a contradiction. 

Without loss of generality, we can choose the computational basis $\{\ket{0},\ket{1}\}$ so that its first element coincides with $\ket{\psi(0,0)}$ and the element $\ket{\psi(1,0)}$ is in the plane that contains the vectors  $\ket{0}$ and $\frac{1}{\sqrt{2}}(\ket{0}+\ket{1})$. What is more, clearly, $\Psi_{A|X}$ cannot have a LHS model, otherwise $\Psi_{A|X}$ could not be mapped into all assemblages by stochastic 1W-LOCCs. Thus, we can safely assume that 
\begin{equation}
\label{eq:Psi_no_LHS}
\Psi_{A|X}\notin\mathsf{LHS}.
\end{equation}
 Hence, we take 
\begin{equation}
\label{eq:psi_varphi}
\begin{split}
&\ket{\psi(0,0)}=\ket{0}\\
&\ket{\psi(1,0)}= \cos(\varphi_{10})\ket{0}+\sin(\varphi_{10})\ket{1} \\
&\ket{\psi(0,1)}= \cos(\varphi_{01})\ket{0}+e^{i\alpha_{01}}\sin(\varphi_{01})\ket{1}  \\
&\ket{\psi(1,1)}= \cos(\varphi_{11})\ket{0}+e^{i\alpha_{11}}\sin(\varphi_{11})\ket{1}  
\end{split}
\end{equation}
with 
\begin{subequations}
\label{eq:def_angles}
\begin{align}
&\varphi_{10}\in\ ]0,\pi[\\
&\alpha_{ax} \in\ [0,2\pi],\ \forall\ (a,x) \notin \{(0,0),(1,0)\}
\end{align}
\end{subequations}
and
\begin{align}
\label{eq:cond_angles}
(\varphi_{a1},\alpha_{a,1}) \neq(\varphi_{a'1},\alpha_{a'1})\ \forall\  a\neq a'.
\end{align}
Equations \eqref{eq:def_angles} and \eqref{eq:cond_angles} hold due to the fact that $\Psi_{A|X}\notin\mathsf{LHS}$ and the no-signaling condition \eqref{eq:nosignaling}. More precisely, if $\varphi_{10}=\{0,\pi\}$, $\ket{\psi(1,0)}=\ket{0}$, which implies that Bob's reduced state is $\varrho_B=\proj{0}$. Then, the no-signalling condition \eqref{eq:nosignaling} implies that $\ket{\psi(0,1)}=\ket{0}=\ket{\psi(1,1)}$. Such assemblage clearly has a LHS model, which contradicts the assumption \eqref{eq:Psi_no_LHS}. The same argument implies \eqref{eq:cond_angles}.
Furthermore, $\Psi_{A|X}\notin\mathsf{LHS}$ and the no-signaling principle imply also that $P_{A|X}(a,x)\neq 0$ for all $(a,x)$. To see the latter, suppose that there is a pair $(a,x)$ for which $P_{A|X}(a,x)= 0$. Then, clearly, $P_{A|X}(a\oplus1,x)=1$. This, together with Eq. \eqref{eq:nosignaling}, implies that there is an $\tilde{a}$ for which $P_{A|X}(\tilde{a},x \oplus 1)=1$, which in turn leads to $\Psi_{A|X}\in\mathsf{LHS}$. 

Let us now consider pure orthogonal assemblages $\{\Psi_{A|X}^{\theta}\}_{\theta}$ with $d=s=r=2$ of the form $\Psi_{A|X}^{\theta}\coloneqq\{\frac{1}{2},\ket{\psi^{\theta}(a,x)}\}_{a,x}$, where
\begin{subequations}
\label{eq:psi_theta}
\begin{equation}
\label{eq:psi_theta_a}
\begin{split}
&\ket{\psi^{\theta}(0,0)} =\ket{0},\\ & \ket{\psi^{\theta}(1,0)}=\ket{1}
\end{split}
\end{equation}
\text{and}
\begin{equation}
\label{eq:psi_theta_b}
\begin{split}
&\ket{\psi^{\theta}(0,1)}=\cos(\theta)\ket{0}+
\sin(\theta)\ket{1},\\& \ket{\psi^{\theta}(1,1)}=-\sin(\theta)\ket{0}+\cos(\theta)\ket{1}.
\end{split}
\end{equation}
\end{subequations} 
We restrict to $0<\theta<\pi/2$ to ensure that $\Psi_{A|X}^{\theta}\notin\mathsf{LHS}$. If all assemblages can be obtained via stochastic 1W-LOCCs from $\Psi_{A|X}$, there must be a stochastic 1W-LOCC $\mathcal{M}^{\theta}$ such $\mathcal{M}^{\theta}(\Psi_{A|X})\propto\Psi^{\theta}_{A|X}$, where ``$\propto$" stands for ``is proportional to". Then, as in App. \ref{sec:assemb_transformations}, Eq.~\eqref{eq:prot6} implies that, for all $\omega$, it must hold that 
\begin{align}
\label{eq:M_theta_omega}
\nonumber
\sum_{a,x}&P^{\theta}_{X|X_f,\Omega}(x,x_f,\omega)P^{\theta}_{A_f|A,X,\Omega, X_f}(a_f,a,x,\omega,x_f)\\\nonumber &\times P_{A|X}(a,x) K^{\theta}_{\omega} \proj{\psi(a,x)} {K^\theta_{\omega}}^{\dagger}\\
&\hspace{1cm}\sim\proj{\psi^{\theta}(a_f,x_f)}\ \ \ \ \forall\ (a_f,x_f),
\end{align}
where the symbol ``$\sim$" is used to signify ``is either equal to zero or proportional to". However, we note again that, since $K_{\omega}\neq0$ and $P_{X|X_f,\Omega}$ and $P_{A_f|A,X,\Omega, X_f}$ are normalised distributions, there are always at least two different pairs $(a_f,x_f)$ for which the left-hand side of Eq.~\eqref{eq:M_theta_omega} is not zero and, therefore, proportional to $\proj{\psi^{\theta}(a_f,x_f)}$, as can be seen by direct case analysis.

Let us then  consider the case 
\begin{equation}
\label{eq:K_theta_omega_fijo}
K^{\theta}_{\omega} \ket{\psi(a,x)}\neq0 \ \forall\ (a,x).
\end{equation}
The other case will be considered later. The first step is to note that Eqs. \eqref{eq:M_theta_omega} and \eqref{eq:K_theta_omega_fijo}, together with the fact that $\Psi_{A|X}\notin\mathsf{LHS}$, imply that
 \begin{subequations}
\label{eq:P_delta_theta}
\begin{align}
\label{eq:P_delta_x_theta}
P^{\theta}_{X|X_f,\omega}(x|x_f,\omega)&=\delta_{x_f\, x \oplus f^{\theta}(\omega)},\\
\label{eq:P_delta_a_theta}
P^{\theta}_{A_f|A,X,\Omega, X_f}(a_f,a,x,\omega,x \oplus f^{\theta}(\omega))&\in\{0,1\}  \ \forall\  (a_f,a,x),
\end{align}
\end{subequations}
where $f^{\theta}(\omega)\in\{0, 1\}$. That is, for any  $\omega$ for which Eq. \eqref{eq:K_theta_omega_fijo} holds, $X$ and $X_f$ must be either fully correlated or fully anticorrelated and $P^{\theta}_{A_f|A,X,\Omega, X_f}(\cdot,a_f,x,\omega,x \oplus f(\omega))$ must be a deterministic distribution for all $(a_f,x)$. The proofs of Eqs.~\eqref{eq:P_delta_theta} are almost identical to the proofs of Eqs.~\eqref{eq:P_delta} in App. \ref{sec:assemb_transformations}, with the only difference that, here, $\Psi_{A|X}\notin\mathsf{LHS}$ and $\Psi^{\theta}_{A|X}\notin\mathsf{LHS}$ are true by assumption. We therefore do not repeat the argument.

The second step is to note that Eqs. \eqref{eq:K_theta_omega_fijo} and \eqref{eq:P_delta_theta} impose restrictions on which $a$'s and $x$'s can contribute to each $a_f$ and $x_f$ in Eq.~\eqref{eq:M_theta_omega}. More precisely, one can see by case analyses that, up to relabelings of $a_f$ or $x_f$, only one type of assignment is possible: 
\begin{subequations}
\label{eq:assignment}
\begin{align}
\label{eq:setcond1}  K^{\theta}_{\omega}\ket{\psi(0,0)} &\propto \ket{\psi^{\theta}(0,0)},\\
\label{eq:setcond2}   K^{\theta}_{\omega}\ket{\psi(1,0)} &\propto  \ket{\psi^{\theta}(1,0)},\\
\label{eq:setcond3}   K^{\theta}_{\omega}\ket{\psi(0,1)} &\propto  \ket{\psi^{\theta}(0,1)},\\
\label{eq:setcond4}   K^{\theta}_{\omega}\ket{\psi(1,1)} &\propto  \ket{\psi^{\theta}(1,1)}.
\end{align}
\end{subequations}

The third step is to show that Eqs.~\eqref{eq:assignment} lead to a contradiction. To this end, together with Eqs.~\eqref{eq:psi_varphi} and \eqref{eq:psi_theta_a}, Eqs. \eqref{eq:setcond1} and \eqref{eq:setcond2} respectively imply that $\left[K^{\theta}_{\omega}\right]_{10}=0$ and $\frac{\left[K^{\theta}_{\omega}\right]_{00}}{\left[K^{\theta}_{\omega}\right]_{01}}=-\tan\left(\varphi_{10}\right)$, where $\left[K^{\theta}_{\omega}\right]_{ij}\coloneqq \bra{i}K^{\theta}_{\omega}\ket{j}$. In turn, dividing both vector components in each one of Eqs. \eqref{eq:setcond3} and \eqref{eq:setcond4}, one obtains, using Eqs.~\eqref{eq:psi_varphi} and \eqref{eq:psi_theta_b}, that 
\begin{subequations}
\begin{align}
\label{eq:pruebacond1}\frac{[K_{\omega}^{\theta}]_{11} }{[K_{\omega}^{\theta}]_{01} }&=\tan(\theta)\left(- \frac{\tan(\varphi_{10})}{\tan(\varphi_{01})e^{i\alpha_{01}}}+1 \right),\\
\label{eq:pruebacond2}\frac{[K_{\omega}^{\theta}]_{11} }{[K_{\omega}^{\theta}]_{01} }&=\frac{-1}{\tan(\theta)}\left(-\frac{\tan(\varphi_{10})}{\tan(\varphi_{11})e^{i\alpha_{11}}}+1 \right).
\end{align}
\end{subequations}
Equating the right-hand sides of Eqs. \eqref{eq:pruebacond1} and \eqref{eq:pruebacond2} gives, after straightforward algebraic manipulation, 
\begin{equation}
\label{eq:conditiontrig}
\frac{1}{\tan(\varphi_{10})}=\frac{\sin^2(\theta)}{\tan(\varphi_{11})e^{i\alpha_{11}}}+\frac{\cos^2 \theta}{\tan(\varphi_{01})e^{i\alpha_{01}}}.
\end{equation}
Since the last condition is independent of $K_{\omega}^{\theta}$ and $\Psi_{A|X}$ should be transformed by stochastic 1W-LOCCs into any member of the family $\{\Psi_{A|X}^{\theta}\}_{\theta}$, the same condition should be fulfilled for any $0<\theta<\pi/2$. It actually suffices to choose just two assemblages $\Psi_{A|X}^{\theta_1}$ and $\Psi_{A|X}^{\theta_2}$, for any $0<\theta_1,\ \theta_2<\pi/2$ with $\theta_1\neq\theta_2$, to arrive at a contradiction. Indeed, since the angles $\varphi_{10}$, $\varphi_{01}$, $\varphi_{11}$, $\alpha_{01}$ and $\alpha_{11}$ are fixed, the only way to satisfy Eq. \eqref{eq:conditiontrig} for both $\theta_1$ and $\theta_2$ is that
\begin{equation}
\tan(\varphi_{10})=\tan(\varphi_{01})e^{i\alpha_{01}}=\tan(\varphi_{11})e^{i\alpha_{11}}.
\end{equation}
This, in turn, can happen only if $\alpha_{01}=0=\alpha_{11}$ and $\varphi_{10}=\varphi_{01}=\varphi_{11}$, which is clearly incompatible with \eqref{eq:cond_angles}. 

It remains to treat the case where, for a certain $\omega$, there exists a pair $(\tilde{a},\tilde{x})$ for which Eq. \eqref{eq:K_theta_omega_fijo} does not hold. By relabeling $a$ or $x$, we can always choose $(\tilde{a},\tilde{x})=(0,0)$. Hence, we consider 
\begin{equation}
\label{eq:K_theta_omega_fijo_zero}
K^{\theta}_{\omega} \ket{\psi(0,0)}=K^{\theta}_{\omega} \ket{0}=0.
\end{equation}
Since $K^{\theta}_{\omega}\neq0$, the latter is true only if the support of $K^{\theta}_{\omega}$ is given by the span of $\ket{1}$. 
Using this and the fact that there are always at least two different pairs $(a_f,x_f)$ for which the left-hand side of Eq. \eqref{eq:M_theta_omega} is not zero, one arrives at a contradiction of the type  $K^{\theta}_{\omega}\propto\ket{\psi^{\theta}(a_f,x_f)}\bra{1}$ for two different pairs $(a_f,x_f)$. This finishes the proof for pure assemblages.
\end{proof}

We finish the appendix with a remark on a difficulty to generalise Theorem \ref{thm:nosteeringbig} to the case of mixed-state assemblages, i.e., to rule out the existence of steering bits also among mixed-state assemblages. Since any mixed-state assemblage can be decomposed as a convex combination of pure assemblages and $\mathcal{M}$ is a linear transformation, one would be tempted to trivially extend the proof above to mixed-state assemblages by using similar reasonings to those presented just above with each pure assemblage in the convex combination together with linearity arguments. However, such straightforward extension unfortunately fails. The reason for this is that each pure assemblage in the pure-assemblage decomposition of a mixed-state assemblage is, as far as we can see, not necessarily no-signalling. We emphasise that all our formalism deals only with no-signalling objects. Hence, while we strongly believe that minimal-dimension steering bits do not exist in general, i.e., even among the mixed-state assemblages, we leave the proof of this statement as an open question.
\section{Proofs of Lemmas \ref{lem:strongmonotonicitydistance} and \ref{lem:convex}}
\label{Sec:Proofs_of_Lemmas}
Before we proceed, we recall some known mathematical facts necessary for the proofs.

First, The von-Neumman relative entropy $S_\text{Q}$, defined by Eq. \eqref{eq:von_Neumman_rel_ent}, fulfils the following properties \cite{Lieb73} .
\begin{itemize} 
\item Given two sets $\{\varrho^{(j)}\}_{j=1,\ldots,n}$ and $\{\varrho'^{(j)}\}_{j=1,\ldots,n}$ of $n$ arbitrary positive-semidefinite (not necessarily normalized) operators each and $n$ positive real numbers $\{\mu^{(j)}\}_{j=1,\ldots,n}$ such that $\sum_{j}\mu^{(j)}=1$, with $n\in\mathbb{N}$, $S_\text{Q}$ satisfies the \emph{joint convexity} property 
\begin{equation}
\label{eq:convexity_Sq}
S_\text{Q}\left(\sum_{i}\mu^{(j)} \varrho^{(j)} \Big\|\sum_{j}\mu^{(j)} \varrho'^{(j)}\right) \leq \sum_{j} \mu^{(j)} S_\text{Q}\left(\varrho^{(j)}\big\|\varrho'^{(j)}\right).
\end{equation}
\item Given any  completely-positive  trace-preserving (CPTP) map $\mathcal{E}$ and any two density operators $\varrho$ and $\varrho$, $S_\text{Q}$ satisfies the \emph{CPTP-map contraction} property
\begin{equation}\label{eq:processingineq}
S_\text{Q}\left(\mathcal{E}(\varrho)|\mathcal{E}(\varrho')\right)\leq S_\text{Q}(\varrho|\varrho').
\end{equation}
\end{itemize}

Second, the Kullback-Leibler divergence  $S_\text{C}$ defined in Eq. \eqref{eq:Kullback_Leibler_div} fulfils the following  property. 
\begin{widetext}
\begin{itemize}
\item Given any two joint probability distributions $P_{X,Y}$ and $P_{X,Y}'$ over classical bits $x$ and $y$, $S_\text{C}$ satisfies the inequality
\begin{align}
\nonumber
\sum_{x}P_{X}(x) S_\text{C}\left(P_{Y|X}(\cdot,x)\|P'_{Y|X}(\cdot,x)\right)&=\sum_{x}P_{X}(x)\sum_y P_{Y|X}(x,y) \frac{\log P_{Y|X}(x,y)}{\log P'_{Y|X}(x,y)}\\
\nonumber
&=\sum_{x,y} P_{X,Y}(x,y)\left(\frac{\log P_{X,Y}(x,y)}{\log P'_{X,Y}(x,y)}-\frac{\log P_{X}(x)}{\log P'_{X}(x)}\right)\\
\nonumber
&= S_\text{C}\left(P_{X,Y}\|P'_{X,Y}\right)-S_\text{C}\left(P_{X}\|P'_{X}\right)\\
\label{eq:knullback}
&\leq S_\text{C}\left(P_{X,Y}\|P'_{X,Y}\right).
\end{align}
\end{itemize}

We are now in a good position to prove the lemmas. 
\subsection{Proof of Lemma \ref{lem:strongmonotonicitydistance}}
We begin by Lemma \ref{lem:strongmonotonicitydistance}.
\begin{proof}[Proof of Lemma \ref{lem:strongmonotonicitydistance}]
First, using the definition of $\mathscr{S}_{\text{R}}$ in Eq. \eqref{eqmat:relativeentropy}, we write the left-hand side of Eq. \eqref{eq:strongmonotonicitydist} explicitly as
\begin{align}
\nonumber
\sum_{\omega} P_{\Omega}(\omega) S_\text{A}\left( \frac{\mathcal{M}_{\omega}\left(\hat{\rho}_{A|X}\right)}{\Tr\left[\mathcal{M}_{\omega}(\rho_{A|X})\right]}\Bigg\|\frac{\mathcal{M}_{\omega}\left(\hat{\rho}'_{A|X}\right)}{\Tr\left[\mathcal{M}_{\omega}(\rho'_{A|X})\right]}\right)
=
\sum_{\omega}P_{\Omega}(\omega) \max_{P_{X_f|\Gamma},\{}E_{\gamma}\} \Bigg[S_\text{C}\left(P_{\Gamma|\Omega}(\cdot, \omega)\|P'_{\Gamma|\Omega}(\cdot, \omega)\right)+&\\
\label{eq:distancepormu0}
 \sum_{\gamma,x_f} P_{X_f|\Gamma}(x_f,\gamma)\, P_{\Gamma|\Omega}(\gamma,\omega)\, S_\text{Q}\left( \frac{\id\otimes E_{\gamma}\left[\mathcal{M}_{\omega}\left(\hat{\rho}_{A|X}\right)\right](x_f)\id\otimes E_{\gamma}^{\dagger}}
 {P_{\Gamma,\Omega}(\gamma,\omega)}
 \:\bigg\|\:\frac{\id\otimes E_{\gamma} \left[\mathcal{M}_{\omega}\left(\hat{\rho}'_{A|X}\right)\right](x_f)\id\otimes E_{\gamma}^{ \dagger}}
 {P'_{\Gamma,\Omega}(\gamma,\omega)} \right)\Bigg],
 \end{align}
where we have used that $P_{\Omega}(\omega)=\Tr[\mathcal{M}_{\omega}(\hat{\rho}_{A|X})]$ and $P'_{\Omega}(\omega)=\Tr[\mathcal{M}_{\omega}(\hat{\rho}'_{A|X})]$, and that  $P_{\Gamma,\Omega}(\gamma,\omega)=P_{\Gamma|\Omega}(\gamma,\omega)P_{\Omega}(\omega)$ and $P'_{\Gamma,\Omega}(\gamma,\omega)=P'_{\Gamma|\Omega}(\gamma,\omega)P'_{\Omega}(\omega)$, with 
%
\begin{subequations}
\label{eq:def_gamma_omega}
\begin{align}
P_{\Gamma|\Omega}(\gamma,\omega)&\coloneqq\Tr \left[\id\otimes E_{\gamma} \frac{\left[\mathcal{M}_{\omega}\left(\hat{\rho}_{A|X}\right)\right](x_f)}{P_{\Omega}(\omega)}\id\otimes E_{\gamma}^{ \dagger}\right]
=\Tr_B \left[\frac{E_{\gamma}\, \mathcal{E}_{\omega}(\rho_{B})\, E_{\gamma}^{\dagger}}{P_{\Omega}(\omega)}\right]\\ 
\intertext{and}
P'_{\Gamma|\Omega}(\gamma,\omega)&\coloneqq\Tr \left[\id\otimes E_{\gamma} \frac{\left[\mathcal{M}_{\omega}\left(\hat{\rho}'_{A|X}\right)\right](x_f)}{P'_{\Omega}(\omega)}\id\otimes E_{\gamma}^{ \dagger}\right]
=\Tr_B \left[\frac{E_{\gamma}\, \mathcal{E}_{\omega}(\rho'_{B})\, E_{\gamma}^{\dagger}}{P'_{\Omega}(\omega)}\right],
\end{align}
\end{subequations}
both of which are independent of $x_f$ and $a_f$.
Now, since $X_f$ and $\Omega$ are independent variables, we can replace $P_{X_f|\Gamma}$ with $P_{X_f|\Gamma,\Omega}$ and exchange the order of the maximisation over $P_{X_f|\Gamma,\Omega}$ and the summation over $\omega$ in Eq. \eqref{eq:distancepormu0}. Furthermore, the optimal measurement operators for which the maximisation over  $\{E_{\gamma}\}$ is attained for each $\omega$ depend, of course, on $\omega$. Hence, we can also exchange the order of the summation over $\omega$ and the maximisation over the measurement operators if we make this dependence explicit by replacing, in Eqs. \eqref{eq:distancepormu0} and \eqref{eq:def_gamma_omega}, $\{E_{\gamma}\}$ with $\{E_{\gamma,\omega}\}$. With this, we write Eq. \eqref{eq:distancepormu0} as
\begin{align}
\nonumber
\sum_{\omega} P_{\Omega}(\omega) S_\text{A}\left( \frac{\mathcal{M}_{\omega}\left(\hat{\rho}_{A|X}\right)}{\Tr\left[\mathcal{M}_{\omega}(\rho_{A|X})\right]}\Bigg\|\frac{\mathcal{M}_{\omega}\left(\hat{\rho}'_{A|X}\right)}{\Tr\left[\mathcal{M}_{\omega}(\rho'_{A|X})\right]}\right)
= \max_{P_{X_f|\Gamma,\Omega},\{E_{\gamma,\omega}\}}\Bigg\{\sum_{\omega}P_{\Omega}(\omega)  \bigg[S_\text{C}\left(P_{\Gamma|\Omega}(\cdot, \omega)\|P'_{\Gamma|\Omega}(\cdot, \omega)\right)+&\\
\label{eq:distancepormu}
 \sum_{\gamma,x_f} P_{X_f,\Gamma|\Omega}(x_f,\gamma,\omega)\, S_\text{Q}\left( \frac{\id\otimes E_{\gamma,\omega}\left[\mathcal{M}_{\omega}\left(\hat{\rho}_{A|X}\right)\right](x_f)\id\otimes E_{\gamma,\omega}^{\dagger}}{P_{\Gamma,\Omega}(\gamma,\omega)}\:\bigg\|\:\frac{\id\otimes E_{\gamma,\omega} \left[\mathcal{M}_{\omega}\left(\hat{\rho}'_{A|X}\right)\right](x_f)\id\otimes E_{\gamma,\omega}^{\dagger}}{P'_{\Gamma,\Omega}(\gamma,\omega)} \right)\bigg]\Bigg\}.&
 \end{align}

Next, using Eqs. \eqref{eq:quant_rep}, \eqref{eq:def_W} and \eqref{eq:def_M_mu}, we write 
\begin{align}
\nonumber
S_\text{Q}\left( \frac{\id\otimes E_{\gamma,\omega}\left[\mathcal{M}_{\omega}\left(\hat{\rho}_{A|X}\right)\right](x_f)\id\otimes E_{\gamma,\omega}^{\dagger}}{P_{\Gamma,\Omega}(\gamma,\omega)}\:\bigg\|\:\frac{\id\otimes E_{\gamma,\omega} \left[\mathcal{M}_{\omega}\left(\hat{\rho}'_{A|X}\right)\right](x_f)\id\otimes E_{\gamma,\omega}^{ \dagger}}{P'_{\Gamma,\Omega}(\gamma,\omega)} \right)&=\\
\nonumber
S_\text{Q}\Bigg( \frac{\sum_{a_f,a,x} P_{X|X_f,\Omega}(x,x_f,\omega)P_{A_f|A,X,\Omega,X_f}(a_f,a,x,\omega,x_f)  
\: \ketbra{a_f}{a_f}\otimes E_{\gamma,\omega}\, K_{\omega}    \:  \varrho_{A|X}(a,x) \:  K^{\dagger}_{\omega}\, E_{\gamma,\omega}^{\dagger} }{P_{\Gamma,\Omega}(\gamma,\omega)}\:\bigg\|&\\
\nonumber
\frac{\sum_{a_f,a,x} P_{X|X_f,\Omega}(x,x_f,\omega)P_{A_f|A,X,\Omega,X_f}(a_f,a,x,\omega,x_f) 
\: \ketbra{a_f}{a_f}\otimes E_{\gamma,\omega}\, K_{\omega}    \:  \varrho'_{A|X}(a,x) \:  K^{\dagger}_{\omega}\, E_{\gamma,\omega}^{ \dagger}}{P'_{\Gamma,\Omega}(\gamma,\omega)} \Bigg)&\leq\\
\nonumber
\sum_{x}P_{X|X_f,\Omega}(x,x_f,\omega)S_\text{Q}\Bigg( \frac{\sum_{a_f,a} P_{A_f|A,X,\Omega,X_f}(a_f,a,x,\omega,x_f)  
\: \ketbra{a_f}{a_f}\otimes E_{\gamma,\omega}\, K_{\omega}    \:  \varrho_{A|X}(a,x) \:  K^{\dagger}_{\omega}\, E_{\gamma,\omega}^{\dagger}}{P_{\Gamma,\Omega}(\gamma,\omega)}\:\bigg\|&\\
\label{eq:before_sum_af}
\frac{\sum_{a_f,a} P_{A_f|A,X,\Omega,X_f}(a_f,a,x,\omega,x_f)
\: \ketbra{a_f}{a_f}\otimes E_{\gamma,\omega}\, K_{\omega}    \:  \varrho'_{A|X}(a,x) \:  K^{\dagger}_{\omega}\, E_{\gamma,\omega}^{ \dagger}}{P'_{\Gamma,\Omega}(\gamma,\omega)} \Bigg),
 \end{align}
where the inequality is due to Eq. \eqref{eq:convexity_Sq}. On the other hand, we note that there always exists a completely positive trace-preserving map $\mathcal{R}_{x,\omega,x_f}:\mathcal{L}(\mathcal{H}_E)\to\mathcal{L}({\mathcal{H}_E}_f)$ such that
\begin{equation}
\label{eq:auxiliary_map}
\mathcal{R}_{x,\omega,x_f}(\proj{a})=\sum_{a_f}\proj{a_f}P_{A_f|A,X,\Omega,X_f}(a_f,a,x,\omega,x_f).
\end{equation}
Hence, we can apply Eq. \eqref{eq:processingineq} to the von-Neumann relative entropy in the right-hand side of Eq. \eqref{eq:before_sum_af} with the map \eqref{eq:auxiliary_map}, to get
 \begin{align}
\nonumber
S_\text{Q}\Bigg( \frac{\sum_{a_f,a} P_{A_f|A,X,\Omega,X_f}(a_f,a,x,\omega,x_f)  
\: \ketbra{a_f}{a_f}\otimes E_{\gamma,\omega}\, K_{\omega}    \:  \varrho_{A|X}(a,x) \:  K^{\dagger}_{\omega}\, E_{\gamma,\omega}^{\dagger}}{P_{\Gamma,\Omega}(\gamma,\omega)}\:\bigg\|&\\
\nonumber
\frac{\sum_{a_f,a} P_{A_f|A,X,\Omega,X_f}(a_f,a,x,\omega,x_f)
\: \ketbra{a_f}{a_f}\otimes E_{\gamma,\omega}\, K_{\omega}    \:  \varrho'_{A|X}(a,x) \:  K^{\dagger}_{\omega}\, E_{\gamma,\omega}^{ \dagger}}{P'_{\Gamma,\Omega}(\gamma,\omega)} \Bigg)&\leq\\
\nonumber
S_\text{Q}\Bigg( \frac{\sum_{a} \: \ketbra{a}{a}\otimes E_{\gamma,\omega}\, K_{\omega}    \:  \varrho_{A|X}(a,x) \:  K^{\dagger}_{\omega}\, E_{\gamma,\omega}^{\dagger}}{P_{\Gamma,\Omega}(\gamma,\omega)}\:\bigg\|
\frac{\sum_{a} \: \ketbra{a}{a}\otimes E_{\gamma,\omega}\, K_{\omega}    \: \varrho'_{A|X}(a,x) \:  K^{\dagger}_{\omega}\, E_{\gamma,\omega}^{ \dagger}}{P'_{\Gamma,\Omega}(\gamma,\omega)} \Bigg)&=\\
\label{eq:use_map_R}
S_\text{Q}\Bigg( \frac{\id\otimes E_{\gamma,\omega}\, K_{\omega}    \:  \hat{\rho}_{A|X}(x) \:  K^{\dagger}_{\omega}\, E_{\gamma,\omega}^{\dagger}\otimes\id}{P_{\Gamma,\Omega}(\gamma,\omega)}\:\bigg\|
\frac{\id\otimes E_{\gamma,\omega}\, K_{\omega}    \: \hat{\rho}'_{A|X}(x) \:  K^{\dagger}_{\omega}\, E_{\gamma,\omega}^{ \dagger}\otimes\id}{P'_{\Gamma,\Omega}(\gamma,\omega)} \Bigg),
 \end{align}
where the quantum representation \eqref{eq:quant_rep} has been invoked again. Then, using Eqs. \eqref{eq:distancepormu}, \eqref{eq:before_sum_af}, and \eqref{eq:use_map_R}, we  obtain 
\begin{align}
\nonumber
\sum_{\omega} P_{\Omega}(\omega) S_\text{A}\left( \frac{\mathcal{M}_{\omega}\left(\hat{\rho}_{A|X}\right)}{\Tr\left[\mathcal{M}_{\omega}(\rho_{A|X})\right]}\Bigg\|\frac{\mathcal{M}_{\omega}\left(\hat{\rho}'_{A|X}\right)}{\Tr\left[\mathcal{M}_{\omega}(\rho'_{A|X})\right]}\right)
\leq \max_{P_{X_f|\Gamma,\Omega},\{E_{\gamma,\omega}\}}\Bigg\{\sum_{\omega}P_{\Omega}(\omega)  \bigg[S_\text{C}\left(P_{\Gamma|\Omega}(\cdot, \omega)\|P'_{\Gamma|\Omega}(\cdot, \omega)\right)&\\
\nonumber 
+ \sum_{\gamma,x_f,x}P_{X_f,\Gamma|\Omega}(x_f,\gamma,\omega)P_{X|X_f,\Omega}(x,x_f,\omega)\times&\\
\nonumber
 S_\text{Q}\Bigg( \frac{\id\otimes E_{\gamma,\omega}\, K_{\omega}    \:  \hat{\rho}_{A|X}(x) \:  K^{\dagger}_{\omega}\, E_{\gamma,\omega}^{\dagger}\otimes\id}{P_{\Gamma,\Omega}(\gamma,\omega)}\:\bigg\|
\frac{\id\otimes E_{\gamma,\omega}\, K_{\omega}    \: \hat{\rho}'_{A|X}(x) \:  K^{\dagger}_{\omega}\, E_{\gamma,\omega}^{ \dagger}\otimes\id}{P'_{\Gamma,\Omega}(\gamma,\omega)} \Bigg)\bigg]\Bigg\}&\\
\nonumber 
\leq \max_{P_{X_f|\Gamma,\Omega},\{E_{\gamma,\omega}\}}\Bigg\{\bigg[S_\text{C}\left(P_{\Gamma,\Omega}\|P'_{\Gamma,\Omega}\right)&\\
\nonumber 
+ \sum_{\gamma,x_f,x,\omega}P_{\Omega}(\omega)P_{X_f,\Gamma|\Omega}(x_f,\gamma,\omega)P_{X|X_f,\Omega,\Gamma}(x,x_f,\omega,\gamma)\times&\\
\label{eq:use_initial_relations}
 S_\text{Q}\Bigg( \frac{\id\otimes E_{\gamma,\omega}\, K_{\omega}    \:  \hat{\rho}_{A|X}(x) \:  K^{\dagger}_{\omega}\, E_{\gamma,\omega}^{\dagger}\otimes\id}{P_{\Gamma,\Omega}(\gamma,\omega)}\:\bigg\|
\frac{\id\otimes E_{\gamma,\omega}\, K_{\omega}    \: \hat{\rho}'_{A|X}(x) \:  K^{\dagger}_{\omega}\, E_{\gamma,\omega}^{ \dagger}\otimes\id}{P'_{\Gamma,\Omega}(\gamma,\omega)} \Bigg)\bigg]\Bigg\}&,
 \end{align}
where the inequality \eqref{eq:use_initial_relations} follows from Eq. \eqref{eq:knullback} and from replacing $P_{X|X_f,\Omega}(x,x_f,\omega)$ with $P_{X|X_f,\Omega,\Gamma}(x,x_f,\omega,\gamma)$, which cannot decrease the value of the resulting maximum. 

Finally, using that, due to Bayes' theorem, it holds that 
\begin{equation}
\sum_{x_f}P_{\Omega}(\omega)P_{X_f,\Gamma|\Omega}(x_f,\gamma,\omega)P_{X|X_f,\Omega,\Gamma}(x,x_f,\omega,\gamma)=P_{X,\Gamma,\Omega}(x,\gamma,\omega),
\end{equation}
and introducing the joint variable $\Xi\coloneqq(\Gamma,\Omega)$, with values $\xi\coloneqq(\gamma,\omega)$, and the joint Kraus operators $T_{\xi}\coloneqq E_{\gamma,\omega}K_{\omega}$, which satisfy the normalisation condition $\sum_{\xi}T^{\dagger}_{\xi}T_{\xi}=\sum_{\gamma,\omega}E^{\dagger}_{\gamma,\omega}K^{\dagger}_{\omega}E_{\gamma,\omega}K_{\omega}=\id$, we write the inequality \eqref{eq:use_initial_relations} as
\begin{align}
\nonumber
\sum_{\omega} P_{\Omega}(\omega) S_\text{A}\left( \frac{\mathcal{M}_{\omega}\left(\hat{\rho}_{A|X}\right)}{\Tr\left[\mathcal{M}_{\omega}(\rho_{A|X})\right]}\Bigg\|\frac{\mathcal{M}_{\omega}\left(\hat{\rho}'_{A|X}\right)}{\Tr\left[\mathcal{M}_{\omega}(\rho'_{A|X})\right]}\right)
\leq \\ 
\label{eq:last_proof_Lemma_1}
\max_{P_{X_f|\Xi},\{T_{\xi}\}}\Bigg\{\bigg[S_\text{C}(P_{\Xi}\|P'_{\Xi})+ \sum_{x,\xi}P_{X,\Xi}(x,\xi)
\times S_\text{Q}\Bigg( \frac{\id\otimes T_{\xi}   \:  \hat{\rho}_{A|X}(x) \:  T_{\xi}^{\dagger}\otimes\id}{P_{\Xi}(\xi)}\:\bigg\|
\frac{\id\otimes T_{\xi}  \: \hat{\rho}'_{A|X}(x) \:  T_{\xi}^{ \dagger}\otimes\id}{P'_{\Xi}(\xi)} \Bigg)\bigg]\Bigg\}.
 \end{align}
By Definition \ref{def:rel_ent_assemb}, the right-hand side of Eq. \eqref{eq:last_proof_Lemma_1} coincides with the right-hand side of Eq. \eqref{eq:strongmonotonicitydist}.
\end{proof}
\subsection{Proof of Lemma \ref{lem:convex}}
For the proof of this lemma, it is useful to re-express Eq. \eqref{eqmat:relativeentropy} in terms of abstract flag states  representing the outcomes of Bob's generalized quantum measurements. Introducing an auxiliary extension Hilbert space $\mathcal{H}_{E_B}$ and an orthonormal basis of it $\{\ket{\gamma}\}$, where each basis member encodes the value $\gamma$ of the measurement outcomes, and using that $\sum_{x}P_{X|\Gamma}(x,\gamma)=1$ for all $\gamma$, we write 
\begin{align}
\nonumber 
S_\text{A}(\rho_{A|X}\|\rho'_{A|X})&
=\max_{P_{X|\Gamma},\{E_{\gamma}\},}  \Bigg\{\sum_{x}P_{X|\Gamma}(x,\gamma) \Bigg[S_\text{C}(P_{\Gamma}\|P'_{\Gamma})\\
\nonumber
&+ \sum_{\gamma}P_{\Gamma}(\gamma)S_Q\left( \frac{\id \otimes E_{\gamma}\hat{\rho}_{A|X}(x)\id \otimes E_{\gamma}^{\dagger}}{P_{\Gamma}(\gamma)}\bigg\|\frac{\id \otimes E_{\gamma}\hat{\rho}'_{A|X}(x)\id \otimes E_{\gamma}^{\dagger}}{P'_{\Gamma}(\gamma)} \right)\Bigg]\Bigg\}\\
\label{eq:quant_rep_gamma}
&=\max_{P_{X|\Gamma}, \{E_{\gamma}\}}  \left[\sum_{x}P_{X|\Gamma}(x,\gamma) S_Q\left(\sum_{\gamma}\proj{\gamma} \otimes E_{\gamma}\hat{\rho}_{A|X}(x)\id \otimes E_{\gamma}^{\dagger} \Big\| \sum_{\gamma}\proj{\gamma} \otimes E_{\gamma}\hat{\rho}'_{A|X}(x)\id \otimes E_{\gamma}^{\dagger} \right)\right]
\end{align}

We can now prove the lemma.
\begin{proof}[Proof of Lemma \ref{lem:convex}]
Using Eq. \eqref{eq:quant_rep_gamma}, we write the left-hand side of Eq. \eqref{eq:convexity} as
\begin{align}
\nonumber 
S_\text{A}\left(\sum_{j}\mu^{(j)} \rho^{(j)}_{A|X} \Big\|\sum_{j}\mu^{(j)} \rho'^{(j)}_{A|X}\right)\\
\nonumber 
=\max_{P_{X|\Gamma},\{E_{\gamma}\}} \left[\sum_{x}P_{X|\Gamma}(x,\gamma)\,
S_Q\left(\sum_{j}\mu^{(j)} \sum_{\gamma}\proj{\gamma} \otimes E_{\gamma}\hat{\rho}^{(j)}_{A|X}(x)\id \otimes E_{\gamma}^{\dagger} \Big\| \sum_{j}\mu^{(j)} \sum_{\gamma}\proj{\gamma} \otimes E_{\gamma}\hat{\rho}'^{(j)}_{A|X}(x)\id \otimes E_{\gamma}^{\dagger} \right)\right]\\
\label{eq:use_joint_convexity}
\leq\max_{P_{X|\Gamma},\{E_{\gamma}\}}  \left[\sum_{x,j}\mu^{(j)} P_{X|\Gamma}(x,\gamma)\, S_Q\left(\sum_{\gamma}\proj{\gamma} \otimes E_{\gamma}\hat{\rho}^{(j)}_{A|X}(x)\id \otimes E_{\gamma}^{\dagger} \Big\| \sum_{\gamma}\proj{\gamma} \otimes E_{\gamma}\hat{\rho}'^{(j)}_{A|X}(x)\id \otimes E_{\gamma}^{\dagger} \right)\right]\\
\label{eq:exchange_order}
\leq\sum_{j}\mu^{(j)}\max_{\{E_{\gamma,j}\},P_{X|\Gamma, J}}  \left[\sum_{x} P_{X|\Gamma, J}(x,\gamma, j)
S_Q\left(\sum_{\gamma}\proj{\gamma} \otimes E_{\gamma,j}\hat{\rho}^{(j)}_{A|X}(x)\id \otimes E_{\gamma,j}^{\dagger} \Big\| \sum_{\gamma}\proj{\gamma} \otimes E_{\gamma,j}\hat{\rho}'^{(j)}_{A|X}(x)\id \otimes E_{\gamma,j}^{\dagger} \right)\right],
\end{align}
where \eqref{eq:use_joint_convexity} follows from Eq. \eqref{eq:convexity_Sq} and, in Eq. \eqref{eq:exchange_order}, we exchanged the order of the maximization and the summation over $j$ by respectively replacing $\{E_{\gamma}\}$ and $P_{X|\Gamma}$ with $\{E_{\gamma,j}\}$ and  $P_{X|\Gamma, J}$, of elements $P_{X|\Gamma, J}(x,\gamma, j)$. Using again Eq. \eqref{eq:quant_rep_gamma}, one sees that, by Definition \ref{def:rel_ent_assemb}, the right-hand side of Eq. \eqref{eq:exchange_order} coincides with the right-hand side of Eq. \eqref{eq:convexity}.
\end{proof}
\end{widetext}


\begin{thebibliography}{99}%

\bibitem{Schrodinger35}
E. Schr\"{o}dinger, {\it Discussion of probability relations between separated systems}, Proc. Camb. Phil. Soc. {\bf 31}, 555 (1935).

\bibitem{Wiseman07}
H. M. Wiseman, S. J. Jones and A. C. Doherty, {\it Steering, Entanglement, Nonlocality, and the Einstein-Podolsky-Rosen Paradox}, Phys. Rev. Lett. {\bf 98}, 140402 (2007); S. J. Jones {\it et al}., Phys. Rev. A
{\bf 76}, 052116 (2007).

\bibitem{Reid09}
M. D. Reid, P. D. Drummond, W. P. Bowen, E. G. Cavalcanti, P. K. Lam, H. A. Bachor, U. L. Andersen, and G. Leuchs, {\it \emph{Colloquium:} The Einstein-Podolsky-Rosen paradox: From concepts to applications}, 
Rev. Mod. Phys. {\bf 81}, 1727 (2009).

\bibitem{Horodecki09}
R. Horodecki, P. Horodecki, M. Horodecki, and K.
Horodecki, {\it Quantum entanglement}, Rev. Mod. Phys. {\bf 81}, 865 (2009).

\bibitem{Brunner13}
N. Brunner, D. Cavalcanti, S. Pironio, V. Scarani and S.
Wehner, {\it Bell nonlocality}, Rev. Mod. Phys. {\bf 86}, 419 (2014).

\bibitem{Cavalcanti09}
E. G. Cavalcanti, S. J. Jones, H. M. Wiseman and M. D.
Reid, {\it Experimental criteria for steering and the Einstein-Podolsky-Rosen paradox}, Phys. Rev. A {\bf 80}, 032112 (2009).

\bibitem{Experiments}
Z. Y. Ou, S. F. Pereira, H. J. Kimble, and K. C. Peng, {\it Realization of the Einstein-Podolsky-Rosen paradox for continuous variables}, 
Phys. Rev. Lett. {\bf 68}, 3663 (1992); W. P. Bowen, R. Schnabel, and P. K. Lam, {\it Experimental Investigation of Criteria for Continuous Variable Entanglement},  Phys. Rev. Lett. {\bf 90}, 043601 (2003); D.-H. Smith {et al}., {\it Conclusive quantum steering with superconducting transition-edge sensors},  Nat. Commun. {\bf 3}, 625 (2012); A. J. Bennet et al.,{\it Arbitrarily Loss-Tolerant Einstein-Podolsky-Rosen Steering Allowing a Demonstration over 1 km of Optical Fiber with No Detection Loophole}, Phys. Rev. X {\bf 2}, 031003 (2012); V. H\"andchen {\it et al}., {\it Observation of one-way Einstein-Podolsky-Rosen steering},  Nat. Phot. {\bf 6}, 598 (2012); S. Steinlechner, J. Bauchrowitz, T. Eberle, and R. Schnabel, {\it Strong Einstein-Podolsky-Rosen steering with unconditional entangled states}, Phys. Rev. A {\bf 87}, 022104 (2013).

\bibitem{Saunders10}
D. J. Saunders, S. J. Jones, H. M. Wiseman, and G. J. Pryde, {\it Experimental EPR-steering using Bell-local states}, Nat. Phys. {\bf 6}, 845 (2010).

\bibitem{loopholefree}
B. Wittmann, S. Ramelow, F. Steinlechner, N. K. Langford, N. Brunner, H. Wiseman, R. Ursin, A. Zeilinger, {\it Loophole-free Einstein-Podolsky-Rosen experiment via quantum steering}, New J. Phys. {\bf 14}, 053030 (2012).

\bibitem{Branciard12}
C. Branciard, E. G. Cavalcanti, S. P. Walborn, V. Scarani and H. M. Wiseman, {\it One-sided device-independent quantum key distribution: Security, feasibility, and the connection with steering}, Phys. Rev. A {\bf 85}, 010301(R) (2012).

\bibitem{He13}
Q. Y. He and M. D. Reid, {\it Genuine Multipartite Einstein-Podolsky-Rosen Steering}, Phys. Rev. Lett. {\bf 111}, 250403 (2013).

\bibitem{Deviceindependent}
J. Barrett, L. Hardy,  and A. Kent, {\it No Signaling and Quantum Key Distribution}, Phys. Rev. Lett. {\bf 95}, 010503 (2005); A. Ac\'in, N. Gisin, and L. Masanes, {\it From Bell’s Theorem to Secure Quantum Key Distribution}, Phys. Rev. Lett. {\bf 97}, 120405 (2006); A. Ac\'in {\it et al}., {\it Device-Independent Security of Quantum Cryptography against Collective Attacks}, Phys. Rev. Lett. {\bf 98}, 230501 (2007).

\bibitem{VedralPlenio}
V. Vedral and M. B. Plenio, {\it Entanglement measures and purification procedures}, Phys. Rev. A {\bf 57}, 1619 (1998); M. B. Plenio and S. Virmani, {\it An introduction to entanglement measures}, Quant. Inf. Comput. {\bf 7}, 1 (2007).

\bibitem{Brandao08}
F. G. S. L. Brand\~{a}o and M. B. Plenio, {\it Entanglement theory and the second law of thermodynamics}, Nature Phys. {\bf 4}, 873 (2008); F. G. S. L. Brand\~{a}o and M. B. Plenio, {\it A Reversible Theory of Entanglement and its Relation to the Second Law},  Comm. Math. Phys. 295, 829 (2010).

\bibitem{Bennett96}
C. H. Bennett, D. P. DiVincenzo, J. Smolin, and W. K.
Wootters, {\it Mixed-state entanglement and quantum error correction}, Phys. Rev. A {\bf 54}, 3824 (1996).

\bibitem{Vedral97}
V. Vedral, M. B. Plenio, M. A. Rippin and P. L. Knight, {\it Quantifying Entanglement}, Phys. Rev. Lett {\bf 78,} 2275 (1997)

\bibitem{Jonathan99}
D. Jonathan and M. B. Plenio, {\it Entanglement-Assisted Local Manipulation of Pure Quantum States}, Phys. Rev. Lett. {\bf 83,} 3566 (1999)

\bibitem{Brandao13}
F. G. S. L. Brand\~{a}o, M. Horodecki, J. Oppenheim, J. M. Renes, and R. W. Spekkens, {\it Resource Theory of Quantum States Out of Thermal Equilibrium}, Phys. Rev. Lett. {\bf 111}, 250404 (2013).


\bibitem{Ahmadi13}
M. Ahmadi, D. Jennings and T. Rudolph, {\it The Wigner-Araki-Yanase theorem and the quantum resource theory of asymmetry}, New J. Phys. {\bf 15}, 013057 (2013).

\bibitem{Gour08}
G. Gour and R. W. Spekkens, {\it The resource theory of quantum reference frames: manipulations and monotones}, New J. Phys. {\bf 10}, 033023 (2008).

\bibitem{Levi14}
F. Levi and F. Mintert, {\it A quantitative theory of coherent delocalization}, New J. Phys. {\bf 16}, 033007 (2014).

\bibitem{Baumgratz13}
T. Baumgratz, M. Cramer, and M. B. Plenio, {\it Quantifying Coherence}, Phys. Rev. Lett. {\bf 113}, 140401 (2014).

\bibitem{Pusey13}
M. F. Pusey, {\it Negativity and steering: A stronger Peres conjecture}, Phys. Rev. A {\bf 88}, 032313 (2013). 

\bibitem{Skrzypczyk13}
P. Skrzypczyk, M. Navascu\'es, and D. Cavalcanti, {\it Quantifying Einstein-Podolsky-Rosen Steering}, Phys. Rev. Lett. {\bf 112}, 180404 (2014).

\bibitem{Piani15}
M. Piani and J. Watrous, {\it Necessary and Sufficient Quantum Information Characterization of Einstein-Podolsky-Rosen Steering}, Phys. Rev. Lett. {\bf 114}, 060404 (2015).

\bibitem{Bowles14}
J. Bowles, T. V\'ertesi, M. T. Quintino and N. Brunner. {\it One-way Einstein-Podolsky-Rosen Steering}, Phys. Rev. Lett. {\bf 112,} 200402 (2014).

\bibitem{Quintino14}
M. T. Quintino, T. V\'ertesi, N. Brunner. {\it Joint Measurability, Einstein-Podolsky-Rosen Steering, and Bell Nonlocality}, Phys. Rev. Lett. {\bf113,} 160402 (2014).

\bibitem{Belen15}
A. B. Sainz, N. Brunner, D. Cavalcanti, P. Skrzypczyk, and T. V\'ertesi, {\it Post-quantum steering}, arXiv:1505.01430 (2015).

\bibitem{footnote2}
One could in principle still attempt a resource theory of steerable quantum states as defined in \cite{Wiseman07, Reid09}. However, since the mathematical condition that defines a quantum state as steerable involves an optimisation over its potential measurements, it is unclear what the precise resource to account for is.  An analogous difficulty is found with Bell non-locality as a resource \cite{Gallego12,deVicente14}. There, one could  conceive a resource theory of ``non-local quantum states'', defined as those able to yield, under local measurements, a non-local probability distribution. However, for the same reasons, non-locality defined directly in terms of probability distributions plays a more relevant role.

\bibitem{Oreshkov09}
O. Oreshkov and J. Calsamiglia, {\it Distinguishability measures between ensembles of quantum states}, Phys. Rev. A {\bf 79}, 032336 (2009).

\bibitem{Ecker91}
A. K. Ekert, {\it Quantum cryptography based on Bell’s theorem}, Phys. Rev. Lett. {\bf 67}, 661(1991).

\bibitem{Gallego12}
R. Gallego, L. E. W\"urflinger, A. Ac\'in, and M. Navascu\'es, {\it Operational Framework for Nonlocality}, Phys. Rev. Lett. {\bf 109}, 070401  (2012).

\bibitem{deVicente14}
J. I. de Vicente, {\it On nonlocality as a resource theory and nonlocality measures}, J. Phys. A: Math. Theor. {\bf 47}, 424017 (2014).

\bibitem{Navascues14}
B. Lang, T. V\'ertesi and M. Navascu\'es, {\it Closed sets of correlations: answers from the zoo}, arXiv:1402.2850 (2014).


\bibitem{footnote1}
The terminology ``deterministic maps'' refers throughout to probability (trace) preserving classical (quantum) maps
, i.e. those that never cause an abortion. 
For classical maps, for instance, this should not be confused with maps where the output bit is a Kronecker delta function of the input bit. In turn, the term ``stochastic" is used throughout to refer to non probability-preserving classical maps or non trace-preserving quantum transformations, which do not occur with certainty.

\bibitem{footnote0}
A natural question (which we leave open) is whether there exists a definition of relative entropy between assemblages that is non-increasing under generic assemblage transformations instead of just 1W-LOCCs, so that it can be understood as measure of distinguishability under fully general strategies. For quantum states, that is the case of $S_\text{Q}$, for instance, which is non-increasing under not only LOCCs but also under any completely positive map. 
However, note that 1W-LOCC-monotonicity of $S_\text{A}$ suffices to introduce a steering monotone.

\bibitem{Kogias14}
I. Kogias, A. R. Lee, S. Ragy, and G. Adesso, {\it Quantification of Gaussian Quantum Steering}, Phys. Rev. Lett. {\bf 114}, 060403 (2015).

\bibitem{Vidal99}
 G. Vidal, {\it Entanglement of Pure States for a Single Copy}, Phys. Rev. Lett. {\bf 83}, 1046 (1999).

\bibitem{Duer99}
W. D\"{u}r, G. Vidal, and J. I. Cirac, {\it Three qubits can be entangled in two inequivalent ways}, Phys. Rev. A {\bf 62}, 062314 (2000).

\bibitem{Owari04}
M. Owari, K. Matsumoto, and M. Murao, {\it Entanglement convertibility for infinite-dimensional pure bipartite states}, Phys. Rev. A {\bf 70}, 050301(R) (2004).

 \bibitem{Nielsen99}
M. A. Nielsen, {\it Conditions for a Class of Entanglement Transformations}, Phys. Rev. Lett. {\bf 83},  436 (1999). 

\end{thebibliography}

\begin{thebibliography}{99}%
\bibitem[42]{Vedral98}
V. Vedral and M. B. Plenio, {\it Entanglement measures and purification procedures}, Phys. Rev. A {\bf 57}, 1619 (1998).




\bibitem[43]{VidalTarrach99}
G. Vidal and R. Tarrach,  {\it Robustness of entanglement}, Phys. Rev. A {\bf 59}, 141 (1999).

\bibitem[44]{Lieb73}
E. H. Lieb, {\it Convex Trace Functions and the Wigner-Yanase-Dyson Conjecture}, Adv. Math. {\bf 11}, 267 (1973).
\end{thebibliography}
\end{document}